\documentclass[11pt, letterpaper]{article} % for ArXiV

\newif\ifsoda
%\sodatrue   %for SODA, comment out this line otherwise

\newif\ifarxiv
\arxivtrue %for ArXiV , comment out this line otherwise

\ifsoda
\usepackage[letterpaper]{geometry}

\usepackage{ltexpprt}
\usepackage{hyperref}

\usepackage{amsmath, amssymb}
\usepackage{graphicx}
\usepackage{verbatim}

\newtheorem{observation}{Observation}

\fi

\ifarxiv
\usepackage[margin=1in]{geometry}
\usepackage{charter}

\usepackage{amsfonts, amsmath, amssymb, amsthm}
\usepackage{bm}
\usepackage{verbatim}
\usepackage{color}
\usepackage{euscript}
\usepackage{graphicx}
\usepackage[usenames,dvipsnames]{xcolor}

\usepackage{xspace}
\usepackage{wrapfig}

\newtheorem{theorem}{Theorem}
\newtheorem{lemma}{Lemma}

\newtheorem{observation}{Observation}

\fi

\begin{document}
%\begin{sloppy}

\ifsoda
\newcommand\relatedversion{}
\renewcommand\relatedversion{\thanks{The full version of the paper can be accessed at \protect\url{https://arxiv.org/abs/1902.09310}}} % Replace URL with link to full paper or comment out this line

\title{4D Range Reporting in the Pointer Machine Model in Almost-Optimal Time\relatedversion}

\date{}

\author{Yakov Nekrich\thanks{Department of Computer Science, Michigan Technological University.}
\and Saladi Rahul\thanks{Department of Computer Science and Automation, Indian Institute of Science.}}

\maketitle
\fi

\ifarxiv
\title{4D Range Reporting in the Pointer Machine Model in Almost-Optimal Time}

\date{}

\author{Yakov Nekrich\thanks{Department of Computer Science, Michigan Technological University.}
  \and Saladi Rahul\thanks{Department of Computer Science and Automation, Indian Institute of Science.}}

\maketitle
\fi

\ifsoda
\fancyfoot[R]{\scriptsize{Copyright \textcopyright\ 2023\\
Copyright for this paper is retained by authors}}
\fi

\ifarxiv
\begin{abstract}
\tolerance=400
In the orthogonal range reporting problem we must pre-process a set $P$ of multi-dimensional points, so that for any axis-parallel query rectangle $q$ all points from $q\cap P$ can be reported efficiently. In this paper we study the query complexity of multi-dimensional orthogonal range reporting in the pointer machine model.  We present a data structure that answers four-dimensional orthogonal range reporting queries in almost-optimal time $O(\log n\log\log n + k)$ and uses $O(n\log^4 n)$ space, where $n$ is the number of points in $P$ and $k$ is the number of points in $q\cap P$ . This is the first data structure with nearly-linear space usage that achieves almost-optimal query time in 4d.  This result can be immediately generalized to $d\ge 4$ dimensions: we show that there is a data structure  supporting $d$-dimensional range reporting queries in time $O(\log^{d-3} n\log\log n+k)$  for any constant $d\ge 4$.
\end{abstract}

\thispagestyle{empty}
\clearpage
\setcounter{page}{1}
\fi

\ifsoda
\begin{abstract} \small\baselineskip=9pt
In the orthogonal range reporting problem we must pre-process a set $P$ of multi-dimensional points, so that for any axis-parallel query rectangle $q$ all points from $q\cap P$ can be reported efficiently. In this paper we study the query complexity of multi-dimensional orthogonal range reporting in the pointer machine model.  We present a data structure that answers four-dimensional orthogonal range reporting queries in almost-optimal time $O(\log n\log\log n + k)$ and uses $O(n\log^4 n)$ space, where $n$ is the number of points in $P$ and $k$ is the number of points in $q\cap P$ . This is the first data structure with nearly-linear space usage that achieves almost-optimal query time in 4d.  This result can be immediately generalized to $d\ge 4$ dimensions: we show that there is a data structure  supporting $d$-dimensional range reporting queries in time $O(\log^{d-3} n\log\log n+k)$  for any constant $d\ge 4$.
\end{abstract}
\fi

\section{Introduction}
In the orthogonal range  reporting problem we must pre-process a set $P$ of multi-dimensional points, so that for any axis-parallel query rectangle $q$ all points from $q\cap P$ can be reported efficiently. Orthogonal range reporting was studied extensively in both RAM model and the pointer machine model, see e.g.,\cite{Chazelle86,Chazelle88,AlstrupBR00,Afshani08,AfshaniAL09,AfshaniAL10,AfshaniAL12,ChanLP11,Chan13,Nekrich20,Nekrich21soda}.   In this paper we investigate the complexity of this problem in the pointer machine model. We  present a data structure that answers four-dimensional queries in almost-optimal time $O(\log n\log\log n + k)$,  where $n$ is the number of points in $P$ and $k$ is the number of points in $q\cap P$.  Our result immediately extends to $d> 4$ dimensions: there exists a data structure supporting orthogonal range reporting queries in time
$O(\log^{d-3} n\log\log n + k)$ for any constant $d\ge 4$.

The pointer machine (PM) model was initially introduced by Tarjan~\cite{Tarjan79}.
Unlike RAM, in the pointer machine model each memory cell can be accessed through a series of pointers  only.
Informally, we can view the pointer machine as the computational model in which the use of arrays is not allowed. % In the opinion of the PI, the RAM model is  arguably more realistic than the pointer machine.
A number of important geometric problems was studied extensively in the  pointer machine model, see e.g.,\cite{ChazelleG86,Chazelle88,MehlhornNA88,c90,Chazelle90b,Afshani08,AfshaniAL09,AfshaniAL10,AfshaniAL12}. 
It is important to investigate the differences in  computational power of the PM and the standard RAM model. Understanding their respective  limits and advantages  can be potentially useful  for both models.

In the pointer machine model, two-dimensional range reporting queries can be answered in time $O(\log n + k)$~\cite{Bentley79,Chazelle86}. Three-dimensional orthogonal range reporting queries can be also answered in $O(\log n + k)$ time~\cite{Afshani08}.  This query time is optimal because predecessor queries can be reduced to orthogonal range reporting in 2d.  Thus, in the PM model  we completely understand the query complexity of range reporting in $d\le 3$ dimensions. However it is not known what is the optimal query time of orthogonal range reporting in $d> 3$ dimensions. Using range trees~\cite{Bentley79}, we can answer 4d range reporting queries in $O(\log^2 n + k)$ time. Afshani et al.~\cite{AfshaniAL09,AfshaniAL10} described a data structure that answers queries in $O((\log n/\log\log n)^2 + k)$ time. In a subsequent paper~\cite{AfshaniAL12} the same authors significantly improved the query time and obtained a data structure with $O(\log^{3/2}n+ k)$ query time.  See Table~\ref{tab:results}. However the optimal complexity of four-dimensional range reporting queries remains an intriguing question. Existence of a data structure that supports four-dimensional queries in  optimal $O(\log n + k)$ time was asked as an open question in several research papers and surveys, see e.g.,~\cite{AfshaniAL12} and in~\cite{Agarwal17survey}.

In this paper we come very close to answering this open question and show that four-dimensional range reporting can be answered in $O(\log n \log\log n + k)$ time in the PM model.  We describe a data structure that uses $O(n \log n)$ space and supports four-dimensional dominance\footnote{We refer to Section~\ref{sec:prelim} for definitions of special cases and other terms used in this section.} queries in $O(\log n \log \log n + k)$ time and $O(n\log n)$ space.  Our data structure can be also modified to
answer 5-sided queries (i.e., orthogonal range reporting queries, such that the query rectangle is bounded on five sides)  within the same time and space bounds, see Theorem~\ref{theor:5sid}. This result can be also extended to the general case of four-dimensional range reporting without increasing the query time (but at the cost of increasing the space usage by $O(\log^3 n)$ factor).  Finally the result can be also generalized to higher dimensions: There is a data structure that uses $O(n\log^{d}n)$ space and supports $d$-dimensional orthogonal range reporting queries in $O(\log^{d-3}n\log\log n + k)$ time  for any constant $d\ge 4$ (see Theorem~\ref{theor:multidim}). Our data structure can also support  range emptiness queries (i.e., determine whether a query range is empty)  in time $O(\log^{d-3}n \log\log n)$.

This paper is structured as follows. We describe our main result in Sections~\ref{sec:topbot} and~\ref{sec:5sid}.   Our data structure supports the special case of range reporting queries, the 4d 5-sided queries. In Section~\ref{sec:general} we show how the data structure can be modified to support more general types of orthogonal range reporting queries.
The main idea of our approach is to replace a single range tree with a hierarchy of range trees with decreasing fan-out.  We construct shallow cuttings for groups of sibling nodes in each tree (Section~\ref{sec:5sid}). This  hierarchy of trees with shallow cuttings assigned to nodes is somewhat similar to the method used in~\cite{Nekrich21soda}.  However a direct application of this method would lead to polynomial space usage; we need to modify this approach in order to save space as described in Sections~\ref{sec:topbot} and~\ref{sec:5sid}.  Although our data structure requires an almost-linear number of shallow cuttings, it can be constructed in $n\log^{O(1)}n$ time as explained in Section~\ref{sec:construct}. Our construction algorithm is based on a non-trivial modification of previous work; we believe that this result can be  of independent interest.

\begin{table}[bt]
  \centering
  \begin{tabular}[h]{|c|c|c|c|} \hline
    Reference & Query Type & Query Time & Space Usage\\ \hline
\cite{Afshani08}              &  Dominance &  $\log^2 n + k$ & $n\log n$ \\
\cite{AfshaniAL09,AfshaniAL10}              &  Dominance & $(\log n/\log\log n)^2+ k$ & $n(\log n/\log \log n)$ \\
\cite{AfshaniAL12}         &  Dominance & $\log^{3/2}n + k$  & $n(\log n/\log\log n)$ \\ \hline
    New       &  Dominance & $\log n \log\log n + k$ & $n\log n$ \\ \hline
   \cite{Bentley79}+\cite{ChazelleG86}               &  General &  $\log^3 n+k $ &  $n \log^2 n$ \\
\cite{Afshani08}              &  General   & $\log^2 n + k$         & $n\log ^4n$ \\ 
\cite{AfshaniAL09,AfshaniAL10}        &  General   & $(\log n/\log\log n)^2 + k$         & $n(\log n/\log\log n)^4$ \\
\cite{AfshaniAL12}        &  General   & $\log^{3/2}n + k$         & $n(\log n/\log\log n)^4$ \\  \hline
     New      &  General   & $\log n \log\log n + k$         & $n\log^4 n$ \\   \hline      
  \end{tabular}
  \caption{4d Orthogonal Range Reporting in the PM Model: Previous and New Results}
  \label{tab:results}
\end{table}

It is also interesting to compare data structures in the PM model to state of the art in the RAM model. It is known that it is possible to achieve $O(\log \log n + k)$ query time for two- and three-dimensional range reporting in the RAM model. On the other hand, any $n\log^{O(1)}n$-space data structure supports four-dimensional range reporting queries in  $\Omega(\log n/\log\log n)$ time~\cite{Patrascu11}. Thus in the RAM model there is an almost  logarithmic gap between the complexity of  orthogonal range reporting in three and four dimensions. The result of this paper indicates that there is either no  gap between 3d and 4d  or this gap is significantly smaller in the case of the PM model.

%\tcr{Rahul: (a) Need a comma after "However" and 
%``Finally". (b) We should point out that obtaining near-linear preprocessing time 
%is a challenging problem for our data structure. This is one of our technical 
%contribution in the paper. That way the reader knows that the section on 
%construction is not a routine section!}

\section{Preliminaries}
\label{sec:prelim}
% Special cases of range reporting when the query range is not bounded along certain directions were studied extensively.
% It is well known that data structures for such special cases  can consume less space. 

Let $P$ be a set of $n$ points in four-dimensional space.
A \emph{dominance} range query is a special case of the orthogonal range query when the query range is a product of half-open intervals.  A four-dimensional dominance range query is a query of the form 
$q=\prod_{i=1}^{4} (-\infty, a_i]$. A \emph{5-sided} query is a query of the form $[a_1,b_1]\times \prod_{i=2}^4(-\infty,b_i]$. In the case of an \emph{emptiness} query, the goal is to report whether 
there is any point of $P$  inside the query region $q$.

\paragraph{Shallow cuttings.} Consider two 3d points $p$ and $q$. 
A point $p$ dominates another point $q$ if and only if 
$p$ has a higher coordinate value than $q$ in each dimension.
Let $P$ be a set of $n$ points in 3d.
A $t$-level shallow cutting of $P$~\cite{Afshani08} 
is a collection ${\cal C}$ of {\em cells or boxes} of the form $(-\infty, a] \times (-\infty, b] \times (-\infty,c]$
 with the following three properties:
\begin{enumerate}
\item The number of cells is only $O(n/t)$, i.e., $|{\cal C}|=O(n/t)$.
\item Each box in  ${\cal C}$ dominates at most $c_2t$ points in $P$.
\item Any point in 3d which dominates at most $c_1t$ points in $P$ will lie 
inside at least one cell in ${\cal C}$.
\end{enumerate}

The {\em conflict list} of a cell $C \in {\cal C}$ is defined as the points of 
$P$ that are inside $C$. The {\em apex point} of a cell $(-\infty, a] \times (-\infty, b] \times (-\infty,c]$ is defined as $(a,b,c)$. 
Given a query point $q$ in 3d, the FIND-ANY query either reports a cell in ${\cal C}$
which contains $q$, or reports that there is no cell containing $q$. We will use 
the following fact about FIND-ANY queries.

\begin{lemma}\label{lem:find-any}
(FIND-ANY query~\cite{Afshani08}.) Let ${\cal C}$ be collection of cells in a $t$-level shallow cutting of $P$. There is a data structure of size $O(|{\cal C}|)$ which can answer the FIND-ANY query in 
$O(\log |{\cal C}|)$ time, where $|{\cal C}|$ denotes the number of cells in ${\cal C}$.
\end{lemma}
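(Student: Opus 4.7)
The plan is to reduce FIND-ANY to planar point location on an orthogonal subdivision of complexity $O(|\mathcal{C}|)$. Define $f\colon\mathbb{R}^2\to\mathbb{R}\cup\{-\infty\}$ by letting $f(x,y)$ be the largest $c$ over all apex points $(a,b,c)$ satisfying $a\ge x$ and $b\ge y$ (and $f(x,y)=-\infty$ if no such apex exists). Because every cell has the form $(-\infty,a]\times(-\infty,b]\times(-\infty,c]$, a query $q=(q_1,q_2,q_3)$ lies in some cell of $\mathcal{C}$ iff $q_3\le f(q_1,q_2)$, and the apex realising this maximum witnesses a valid answer. So it suffices to evaluate $f(q_1,q_2)$, together with a pointer to a maximising apex, in $O(\log|\mathcal{C}|)$ time using $O(|\mathcal{C}|)$ space.

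I would first argue that $f$ is a staircase function whose domain decomposes into $O(|\mathcal{C}|)$ axis-aligned rectangular faces on each of which $f$ is constant and labelled with a witness apex. Sort the apexes in decreasing $c$-order, and maintain the union $U_k$ of the first $k$ quadrants $(-\infty,a_j]\times(-\infty,b_j]$, which is a rectilinear staircase region. The region where the $k$-th apex realises $f$ is $U_k\setminus U_{k-1}$; a standard amortised argument shows that across all $k$ the incremental boundary complexity, and hence the total number of new maximal rectangular faces carved out, is $O(|\mathcal{C}|)$. Each face carries a pointer to the apex that created it.

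Next I would install a linear-space, logarithmic-query point-location structure on this orthogonal subdivision, using only pointer operations. A clean way is a Sarnak--Tarjan style persistent balanced search tree driven by a sweep in $y$: every sweep event either inserts or deletes one $x$-interval, node copying gives $O(1)$ amortised space per event, and a query at $(q_1,q_2)$ locates the persistent version indexed by $q_2$ via a binary search over events and then searches for the $x$-slab containing $q_1$. It reads off the stored $c$-value $c^\star$ and the associated apex, and returns the corresponding cell provided $q_3\le c^\star$, otherwise reports failure. A vertical-trapezoidal-decomposition point-location structure in the style of Edelsbrunner--Guibas--Stolfi is an equally valid pointer-machine alternative.

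The delicate step is the $O(|\mathcal{C}|)$ complexity bound on the subdivision: shallow-cutting cells can overlap wildly, yet only their contribution to the $+z$-facing envelope matters, and the total number of vertices generated as the staircase $U_k$ evolves must be amortised against the $|\mathcal{C}|$ quadrant insertions. Once that bound is in hand everything downstream is black-box, and the total $O(|\mathcal{C}|)$ space plus $O(\log|\mathcal{C}|)$ query time follow.
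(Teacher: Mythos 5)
Your reduction of FIND-ANY to evaluating the upper envelope $f$ of the apex quadrants, bounding the staircase subdivision by charging removed maximal corners to the $|\mathcal{C}|$ insertions, and then doing persistent-sweep planar point location is correct, and it is essentially the argument behind the cited result: the paper itself gives no proof of Lemma~\ref{lem:find-any}, importing it from Afshani's 3d dominance paper, where the structure is likewise an $O(|\mathcal{C}|)$-size orthogonal point-location structure over the projection of the cells' upper envelope. So there is nothing to fix; your write-up is a valid pointer-machine proof of the black-box lemma.
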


\section{Top and bottom structures}
\label{sec:topbot}

%\subsection{4d 5-sided query}
% In a 4d 5-sided range reporting query, the input is a set $P$ of 
% $n$ points  in 4d, the query is a box of the form 
% $q=[a_1,b_1] \times \Pi_{i=2}^{4}(-\infty, b_i]$, and the goal 
% is to report the points that are inside $P\cap q$.
We use the notation  $[m]=\{1,2,\ldots,m\}$.
In a {\em restricted} 4d 5-sided  reporting query, 
the additional constraint is that 
first coordinates of points are from $[n^{1/3}]$ and for each $j\in [n^{1/3}]$
there are $n^{2/3}$ points of $P$ whose
first coordinate  value is equal to $j$.
 We will prove the following result about restricted 5-sided queries.

\begin{theorem}\label{thm:5-sided-assume}
Assume that there is a  restricted 4d 5-sided  reporting structure that  occupies $O(n\log n)$ space and
answers a query in $O(\log n\cdot\log\log n + k)$ time.
Then there is 4d 5-sided range reporting structure 
that occupies $O(n\log n)$ space 
and answers queries in 
$O(\log n\cdot\log\log n+ k)$ time.
\end{theorem}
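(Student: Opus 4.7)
The plan is to carry out a range-tree-style reduction on the first coordinate, but with a non-standard tree of decreasing fan-out that has recursion depth only $O(\log\log n)$. Define $m_0=n$ and $m_{i+1}=m_i^{2/3}$, so $m_i=n^{(2/3)^i}$ and the recursion bottoms out at depth $D=O(\log\log n)$. Build a tree $T$ on the points of $P$ sorted by first coordinate (breaking ties by any secondary order): each internal node of size $m$ has $m^{1/3}$ children, and each child represents a consecutive block of $m^{2/3}$ points.

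At every internal node $v$ of size $m$, I associate to each point in $v$'s subtree a \emph{pseudo first coordinate} in $[m^{1/3}]$, namely the index of the child of $v$ that contains it. By construction this point set has exactly $m^{2/3}$ points per pseudo first-coordinate value, matching the restricted format of the hypothesis. I install at $v$ the assumed restricted 4d 5-sided structure on this point set; the original coordinates of each point are retained so that answers are reported in the original coordinate system.

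To answer a query $[a_1,b_1]\times\prod_{i=2}^{4}(-\infty,b_i]$, start at the root. At the current node $v$ use a binary search on the children's first-coordinate boundaries to find the indices $c_l,c_r$ of the first and last child whose block lies entirely in $[a_1,b_1]$; issue one restricted query with pseudo range $[c_l,c_r]$ on $v$'s structure to report the points of the ``middle'' children, then recurse into the at most two boundary children. Thus at most two nodes are visited per level, for a total of $O(\log\log n)$ nodes.

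The main obstacle is checking that both the time and space budgets survive, and fortunately both reduce to the same telescoping sum. A restricted query at a visited level-$i$ node costs $O(\log m_i \cdot \log\log m_i)=O((2/3)^i\log n\cdot\log\log n)$; summing over levels using $\sum_i (2/3)^i=O(1)$ gives $O(\log n\log\log n)$, plus $O(k)$ for the output. For space, level $i$ has $n^{1-(2/3)^i}$ nodes, each using $O(m_i\log m_i)$ space by hypothesis, so level $i$ contributes $O(n(2/3)^i\log n)$ and the full tree uses $O(n\log n)$. The binary searches inside nodes cost $O(\log m_i)$ each, and telescope to $O(\log n)$, which is dominated by the main bound. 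The recursion terminates in nodes of constant size, handled by a brute-force substructure.
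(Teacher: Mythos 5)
Your proposal is correct and is essentially the paper's own construction: your tree of decreasing fan-out (a node of size $m$ split into $m^{1/3}$ blocks of $m^{2/3}$ points, each equipped with the assumed restricted structure on child indices as pseudo first coordinates) is exactly the unrolled form of the paper's ``top structure plus recursive bottom structures,'' and your query decomposition (one restricted query for fully covered blocks, recursion into at most two boundary blocks, which degenerate into one-sided searches after the first split) matches the paper's handling of the $v_a\neq v_b$ case via a restricted query plus two dominance-type recursions, with the same telescoping space and time analyses. The only cosmetic caveat is that in the pointer machine the ``binary search on children's boundaries'' should be realized by a balanced search tree over the child boundaries (which is precisely the paper's binary $\mathcal{T}_{top}$), costing the same $O(\log m_i)$ per visited node.
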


\paragraph{Structure.} Let ${\cal T}$ be a range tree built on the 
first coordinates of the 
points in $P$. The fanout or degree of 
${\cal T}$ is two. 
Let $\ell=\frac{1}{3}\log n$ be 
the level of the tree (root is at level zero) 
which has $n^{1/3}$ nodes. Then the {\em top structure}, ${\cal T}_{top}$,  
is a subset of ${\cal T}$ restricted to 
levels $1, 2,\ldots, \ell$. 
Let $v_1, v_2,\ldots, v_t$ be the nodes in the left-to-right ordering 
at level $\ell$, 
and let $P(v_1), P(v_2),\ldots, P(v_t)$ be the 
points of $P$ lying in their subtrees, respectively. 
Then, for each point $p(p_1,p_2,p_3,p_4)\in P$, 
if $p$ lies in the subtree of 
$v_i$, then $p$ is transformed to a new point $p'=(i, p_2,p_3,p_4)$.
Let $P_{top}=\bigcup_{p\in P} \{p'\}$ be the new collection of points.
An instance of  restricted 4d 5-sided reporting structure (Theorem~\ref{thm:rest-5-sided}) 
on $P_{top}$ is constructed. % Next, 
% 4d dominance reporting structures (Theorem~\ref{thm:4d-dom}) are constructed on 
% $P(v_1), P(v_2),\ldots, P(v_t)$.
Finally, recursive {\em bottom} structures are constructed on 
 $P(v_1), P(v_2),\ldots, P(v_t)$.
The recursion stops when the cardinality of a 
point set falls below a suitably large constant, 
and the queries on such point sets can be 
handled via naive scan.

\paragraph{Query algorithm: 5-sided queries.} The leaf nodes of ${\cal T}_{top}$ are the 
nodes at level $\ell=\frac{1}{3}\log n$.
For each leaf node $v\in {\cal T}_{top}$, 
its {\em range} $[a(v), b(v)]$ is defined as 
follows: $a(v)$ (resp., $b(v)$) is the 
first coordinate value of the 
leftmost (resp., rightmost) point in the
subtree of $v$. Let ${\cal R}_{top}$ be a 
collection of these ranges. 
Define $[a_1^{top}, b_1^{top}]$ 
to be the union of all the ranges in ${\cal R}_{top}$
which lie completely inside $[a_1,b_1]$.
Let $v_a$ and $v_b$ be the leaf nodes in ${\cal T}_{top}$ whose ranges contain
$a_1$ and $b_1$, respectively.
\begin{enumerate}
    \item If $v_a \neq v_b$, then perform a  restricted 4d 5-sided query
    on $P_{top}$ with the query range 
    $[a_1^{top}, b_1^{top}] \times 
    \prod_{i=2}^{4} (-\infty, b_i]$. 
    Next, perform 4d dominance queries
    on $P(v_a)$ and $P(v_b)$  with queries
    $[a_1, \infty) \times \prod_{i=2}^{4} (-\infty, b_i]$ and $(-\infty, b_1] \times 
    \prod_{i=2}^{4} (-\infty, b_i]$, respectively. Dominance queries can be answered as explained below. 
    
    \item If $v_a=v_b$, then perform a 4d 5-sided 
    query on the bottom structure corresponding 
    to $v_a$ with the query $q=[a_1,b_1] \times \prod_{i=2}^{4}(-\infty, b_i]$.
\end{enumerate}

\paragraph{Query algorithm:  4d dominance queries.}
Now we explain how a dominance query can be answered. First, the query algorithm performs a restricted  query 
on $P_{top}$ with the query range $(-\infty, b_1^{top}) 
\times \prod_{i=2}^{4} (-\infty, b_i]$, where $b_1^{top}$ is defined as above. Since a restricted 4d dominance query is a special case 
of a restricted 4d 5-sided query, we can use the data structure on $P_{top}$ to answer this query.  Let $v_b$ be the leaf node 
in ${\cal T}_{top}$ containing $b_1$. Then, we perform a 
4d dominance query on the bottom structure corresponding to 
$v_b$.

\paragraph{Analysis.} Let $S(n)$ be the space 
occupied by the data structure when built on 
$n$ points. At ${\cal T}_{top}$, we assume that  
the  restricted 4d 5-sided structure 
occupies $O(n\log n)$ space.
Then (ignoring the output size term), 
\[S(n) \leq n^{1/3}\cdot S(n^{2/3}) + O(n\log n). \]
Let $s(n)=S(n)/n$. Then,

\[ s(n) \leq s(n^{2/3}) + O(\log n) \implies 
s(n) = O(\log n).\]
This solves to $S(n)=O(n\log n)$.
Now we analyze the query time of dominance queries.
Let $Q_4(n)$ be the time taken to answer a 4d dominance 
query. Then,
\[ Q_4(n) = Q_4(n^{2/3}) + O(\log n\cdot\log\log n)=
O(\log n\cdot\log\log n).\]
Now we analyze the time $Q_5(n)$ of answering a 5-sided query. There are two cases:
\begin{enumerate}
\item If $v_a \neq v_b$,
then let $Q'_5(n)$ be the time taken to answer 
the restricted query at ${\cal T}_{top}$ (ignoring the output size term).
By assumption of Theorem~\ref{thm:5-sided-assume},  $Q'_5(n)=O(\log n\cdot\log\log n + k)$ 
time. We already know that $Q_4(n)=O(\log n\log\log n)$.  Hence, $Q_5(n)=O(Q_4(n)+Q'_5(n))=O(\log n\cdot\log\log n)$. 

\item If $v_a=v_b$, then let $Q_5(n)$ be the 
time taken to answer the query (again ignoring the 
output size term). Then, 
\[Q_5(n) = Q_5(n^{2/3}) + O(\log n). \]
\end{enumerate}

Note that
there is no recursion involved in the first case.
If the second case happens $i$ times before the first case happens, then
\begin{align*}
Q_5(n)&\leq O(\log n + \log n^{2/3} +\ldots + 
\log n^{(2/3)^i}) + Q'_5(n^{(2/3)^i}) +Q_4(n^{(2/3)^i})\\
&=O(\log n) + Q'_5(n^{(2/3)^i})+Q_4(n^{(2/3)^i})
\leq O(\log n) + Q'_5(n)+Q_4(n)=O(\log n\cdot\log\log n).
\end{align*}
This finishes the proof of Theorem~\ref{thm:5-sided-assume}.

\section{Restricted 4d 5-sided structure}
\label{sec:5sid}

In the previous section, Theorem~\ref{thm:5-sided-assume} 
 assumes the existence of 
a restricted 4d 5-sided structure, with $O(n\log n)$ space and 
$O(\log n\cdot\log\log n+k)$ query time. 
In this section we  build this data structure. 
 Recall that in a 
 restricted 4d 5-sided query, the input is a set $P$ of 
$n$ points with the restriction that, for each $j\in [n^{1/3}]$, 
there will be $n^{2/3}$ points of $P$ whose
first coordinate  value is equal to $j$. The query 
is a 5-sided box $[a_1,b_1] \times \prod_{i=2}^{4}(-\infty, b_i]$.

\begin{theorem}\label{thm:rest-5-sided}
There is a restricted 4d 5-sided reporting structure which occupies $O(n\log n)$ 
space and answers a query in $O(\log n\cdot\log\log n + k)$ time.
\end{theorem}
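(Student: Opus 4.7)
My plan is to build a balanced binary range tree $\mathcal{T}$ on the first coordinates of the points in $P$. Because first coordinates lie in $[n^{1/3}]$ with exactly $n^{2/3}$ points per value, $\mathcal{T}$ has exactly $n^{1/3}$ leaves and height $\tfrac{1}{3}\log n$. At each internal node $v$, let $P'(v)$ denote the projection of $P(v)$ onto coordinates $2$--$4$, and equip $P'(v)$ with a family of 3d shallow cuttings at geometrically increasing levels $t=1,2,4,\ldots$, each paired with the FIND-ANY structure of Lemma~\ref{lem:find-any}. A 5-sided query $[a_1,b_1]\times \prod_{i=2}^4(-\infty,b_i]$ decomposes $[a_1,b_1]$ into $O(\log n)$ canonical nodes of $\mathcal{T}$ (at most two per level, along the paths to the leaves containing $a_1$ and $b_1$), and at each canonical node $v$ the sub-query reduces to a 3d dominance query at $P'(v)$ with the point $(b_2,b_3,b_4)$.

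A naive use of shallow cuttings spends $O(\log n + k_v)$ per canonical node and sums to $O(\log^2 n + k)$, so the essential step is to share the FIND-ANY search across canonical nodes. I would install pointers from each cell of $v$'s shallow cutting to the $O(1)$ relevant cells of its two children's cuttings, so that a cell at $v$ containing the 3d query point can be refined to a cell at a child in $O(\log\log n)$ time --- this is the \emph{hierarchy of cuttings} idea alluded to in~\cite{Nekrich21soda}. Combined with a single root-level FIND-ANY at $O(\log n)$ cost and $O(\log n)$ subsequent refinements of $O(\log\log n)$ each, the total search cost becomes $O(\log n\cdot\log\log n)$; enumeration at each canonical node then contributes only $O(k_v)$ via the usual walk on the selected shallow cutting. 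Since each point of $P$ appears in exactly $\tfrac{1}{3}\log n$ subtree point sets $P(v)$, and the cutting plus pointer data at each node is linear in $|P(v)|$, the total space is $O(n\log n)$, as desired.

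The hard part will be the design of the inter-level pointers. A parent's cutting covers the union of both children's point sets, so refining to a child cell that still contains the 3d query point requires a secondary organization of the cells --- for instance by apex coordinate --- that supports $O(\log\log n)$-time pointer-based navigation in the pointer-machine model. Verifying that only $O(|P(v)|)$ such pointers are needed per node, and that the whole collection of linked cuttings can still be constructed in $n\log^{O(1)}n$ time (Section~\ref{sec:construct}), is the crux of the argument; it is here that the naive application of the Nekrich21 technique would blow up, and that a more careful accounting --- essentially charging each pointer to a constant-sized pair of cells in adjacent cuttings --- is needed in order to save space.
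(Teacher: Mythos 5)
Your high-level target --- sharing the FIND-ANY cost across the $O(\log n)$ canonical nodes of a binary range tree --- is the right one, but the step that is supposed to deliver the sharing, namely refining a cell of $v$'s cutting to a cell of a child's cutting that still contains the 3d query point in $O(\log\log n)$ time with only linear pointer data, is exactly the crux, and as stated it does not hold up. Two concrete obstructions. First, the only containment you can precompute is via apexes: if $C$ is a cell of a $t$-level cutting of $P(v)$, its apex dominates at most $c_2t$ points of $P(v)$, hence of $P(w)$ for a child $w$, so a precomputed pointer can only lead to a cell of a roughly $(c_2/c_1)t$-level cutting of $P(w)$. The level thus inflates by a constant factor greater than $1$ at every hop, and along a root-to-leaf path of length $\Theta(\log n)$ it becomes polynomial in $n$; the cell you hold at a canonical node then certifies nothing about $k_v$ and cannot be enumerated in $O(k_v)$ time. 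Pushing the level back down at an intermediate node means searching a cutting with $n^{\Omega(1)}$ cells, i.e.\ re-running a FIND-ANY at $\Theta(\log n)$ cost --- the $O(\log^2 n)$ bound you set out to beat. Second, the fallback you sketch (a ``secondary organization of the cells by apex coordinate'' supporting $O(\log\log n)$ navigation) is not available on a pointer machine: locating the query point among $\Omega(n^{\varepsilon})$ cells is a predecessor-type search whose cost is logarithmic in the number of cells, and fractional cascading does not apply to the staircase searches behind Lemma~\ref{lem:find-any}. This is precisely why \cite{AfshaniAL12} resorted to a rectangle-stabbing reduction to reach $O(\log^{3/2}n)$; a simple parent-to-child pointer scheme would already have given $O(\log n\log\log n)$. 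A telling symptom is that your argument never uses the restriction (first coordinates in $[n^{1/3}]$, $n^{2/3}$ points per value), whereas the paper's space bound depends on it.

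The paper resolves the tension differently: it builds $O(\log\log n)$ trees $T_i$ of fanout $n^{1/3^i}$ rather than one binary tree; at every node it builds $f_{i-1}^2$-level cuttings for \emph{all} bounded/prefix/suffix child ranges $P(v,\ell,r)$ (Lemma~\ref{lem:cover-points} and Lemma~\ref{lem:size-of-c} keep the number of active ranges in round $i$ at $O(3^i)$, and the restricted assumption makes this affordable in $O(n\log n)$ space, Lemma~\ref{lemma:size}); and inside every cell it builds a \emph{nested} $\Theta(f_i^2)$-level cutting with its own FIND-ANY structure. The nested FIND-ANY is what legitimately lowers the level: it locates a low-level cell containing the query point itself (not merely the apex), and only then follows precomputed pointers to $O(1)$ cells of the next tree (Lemma~\ref{lem:beta-cells}). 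The cost balances because round $i$ performs $O(3^i)$ searches, each over $n^{O(1/3^i)}$ nested cells, i.e.\ $O(\log n)$ per round and $O(\log n\log\log n)$ overall; reporting is then handled by 3d dominance structures on $\log^6 n$-level cuttings at the last tree plus a slow structure for the case $k=\Omega(\log^2 n)$. To rescue your plan you would need a mechanism with the same effect as the nested cuttings --- re-anchoring the query point in a low-level cell without a $\Theta(\log n)$ search at each of $\Theta(\log n)$ nodes --- and that is essentially the content of Section~\ref{sec:5sid}.
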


Combining the above  theorem (Theorem~\ref{thm:rest-5-sided}) with Theorem~\ref{thm:5-sided-assume} 
proves our main result about 4d 5-sided queries. 
% that there is a data structure for 4d 5-sided queries which 
%uses $O(n\log n)$ space and answers a query in $O(\log n\cdot\log\log n+k)$ time.
\begin{theorem}
\label{theor:5sid}
%There is a data structure of size $O(n\log n)$ which can answer four-dimensional dominance emptiness queries 
%in $O(\log n \cdot\log\log n)$ time and 4d dominance reporting queries in $O(\log n \cdot \log \log n + k)$ time.  The construction time is   $n\log^{O(1)}n$. \\
There exists a  data structure that answers 5-sided 4d range reporting queries on a set $P$ in $O(\log n \cdot \log \log n + k)$ time and uses $O(n\log n)$ space, where $n$ is the number of points in $P$ and $k$ is the number of points in the query range.
\end{theorem}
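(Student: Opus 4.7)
My overall plan is to invoke the two previously stated theorems in sequence: Theorem~\ref{thm:5-sided-assume} reduces a general 4d 5-sided query to a restricted 4d 5-sided query without increasing either the $O(n\log n)$ space or the $O(\log n\log\log n+k)$ time, and Theorem~\ref{thm:rest-5-sided} supplies a data structure for the restricted problem with exactly those bounds. Combining the two yields Theorem~\ref{theor:5sid} immediately, so the real work is the proof of Theorem~\ref{thm:rest-5-sided}.

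For the restricted 5-sided structure, I would build a range tree $\mathcal{T}^*$ on the $n^{1/3}$ distinct first coordinates. At each internal node, for every canonical contiguous group of its children, I would precompute 3d shallow cuttings on the projections of the contained points to the 2nd, 3rd and 4th coordinates, at geometrically increasing levels $t=1,2,4,\ldots$, together with the FIND-ANY structures from Lemma~\ref{lem:find-any} and the conflict lists of the cells. Following the introduction's remark that a single range tree only suffices at polynomial cost, I would in fact maintain a hierarchy of such trees with decreasing fan-out, distributing cuttings of different levels across them so that each point appears in only $O(\log n)$ cuttings in total. To answer a query $[a_1,b_1]\times\prod_{i=2}^4(-\infty,b_i]$ I decompose $[a_1,b_1]$ into canonical intervals and for each interval use FIND-ANY to locate, in the appropriate-level cutting, a cell whose apex dominates $(b_2,b_3,b_4)$; scanning that cell's conflict list then reports the output points in the query. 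Choosing the cutting level so that the conflict list has size $O(\max(1,k'))$ in terms of the local output $k'$ is what charges the reporting cost to $k$.

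The main obstacle is balancing the fan-out so the query cost stays at $O(\log n\log\log n)$ rather than $O(\log^2 n)$. With fan-out two there are already $\Theta(\log n)$ canonical intervals, and a naive $O(\log n)$ FIND-ANY for each would give $\Theta(\log^2 n)$. Shaving the near-$\log n/\log\log n$ factor requires the hierarchy of trees with decreasing fan-out together with a careful assignment of shallow-cutting levels to each tree, plus some amortization or fractional-cascading-style coordination of the FIND-ANY calls across canonical intervals. Verifying simultaneously the space bound $O(n\log n)$ (each point in only $O(\log n)$ cuttings) and the time bound $O(\log n\log\log n+k)$ (each canonical interval handled in amortized $O(\log\log n)$ search time plus its local output) will be the delicate part, and is the content of Section~\ref{sec:5sid} referenced by Theorem~\ref{thm:rest-5-sided}.
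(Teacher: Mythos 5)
Your proposal is correct and takes essentially the same approach as the paper: the paper's proof of Theorem~\ref{theor:5sid} is exactly the combination of Theorem~\ref{thm:5-sided-assume} (the top/bottom reduction) with Theorem~\ref{thm:rest-5-sided} (the restricted structure). Your additional sketch of how to build the restricted structure is not required for this statement (it is the content of Section~\ref{sec:5sid}, where the paper uses a hierarchy of trees with decreasing fan-out, nested shallow cuttings, and pointers between cells of adjacent trees rather than fractional-cascading-style coordination), and this difference does not affect the correctness of your derivation.
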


%This proves~\ref{theor:5sid}.
%We will simultaneously present the solution for 4d dominance query and 
%restricted 4d 5-sided query. 
We will start by describing the solution for the {\em emptiness} query, 
which reports whether $P \cap q$ is empty or not. Later, we will extend 
our solution to the reporting version of the problem, i.e., reporting 
all the points in $P \cap q$. 

\subsection{The emptiness query}

\paragraph{High-level overview.} A straightforward way to answer 
4d dominance emptiness query is to build a range tree on the 
first coordinate of the points in $P$, and reduce the query algorithm to 
solving $O(\log n)$ instances of FIND-ANY query (Lemma~\ref{lem:find-any}) 
which takes $O(\log^2n)$ time in total. Roughly speaking, Afshani et al.~\cite{AfshaniAL12} improved the 
query time to $O(\log^{1.5}n)$ by reducing the problem to an instance 
of 3d rectangle stabbing (where the input is axis-parallel boxes in 3d and the query is a 
point) which helps in answering  several instances of 
FIND-ANY query in one-shot. 

Our solution is based on a different approach. In the $i$-th iteration of the 
query algorithm, we  reduce the problem of deciding whether 
$P\cap q$ is empty or not to $O(3^i)$ instances of FIND-ANY queries
on shallow cuttings consisting of $n^{O(1/3^{i})}$ cells. As a result, by Lemma~\ref{lem:find-any},
the time spent in the $i$-th iteration will be $O(3^i\cdot \log n^{1/3^{i}})=O(\log n)$.
In total, only $O(\log_{3}\log n)$ iterations will be performed and hence, the query 
time is reduced to $O(\log n\cdot \log_{3}\log n)$. Our solution for restricted 4d 5-sided query
is a generalization of this approach. 

Consider a  parameter $\beta$. 
Some of the equations in this section will be presented 
in terms of $\beta$.  Setting $\beta=3$ 
will correspond to the restricted 
4d 5-sided solution.

\paragraph{Idea 1: Constructing many range trees.}
We start by constructing $\log_{\beta}\log n$ trees $T_2,\ldots, T_{\log_{\beta}\log n}$.
The fanout, $f_i$, of tree $T_i$ is $n^{1/\beta^i}$. 
Each $T_i$ has $n^{1/3}$ leaf nodes and the $j$-th leaf node
stores all the points in $P$ whose first coordinate value is equal to $j$.
%For 4d dominance, the height of tree $T_i$ is $2^i$, and for restricted 4d 5-sided, the 
%height of tree $T_i$ is $(\log n^{1/3})/(\log n^{1/3^i})=3^{i-1}$.
For $2\leq i < \log_{\beta}\log n$,  
tree $T_{i+1}$ can be constructed from $T_{i}$ as follows:
pick an internal node $v\in T_i$ and let $Ch(v)$ be its child nodes.
We will replace $v$ and $Ch(v)$ (a two-level tree) with a subtree 
of fanout $f_{i+1}$, the root of the new subtree is the  node $v$ and $Ch(v)$ 
are the leaf nodes of the new subtree. After each internal node in $T_i$ performs this operation,  
the tree $T_{i+1}$ is constructed.

\begin{figure}[bt]
 \begin{center}
      \includegraphics[scale=1]{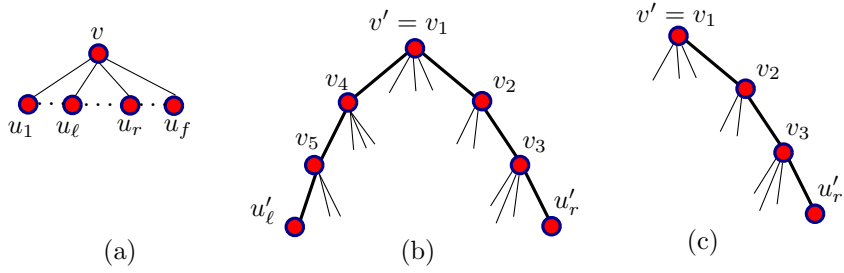}
 \end{center}
 \caption{(a) A node $v \in T_i$, (b) For $\beta=3$, five nodes in $T_{i+1}$ are enough
 to cover any bounded range of $v$,  and (c) For $\beta=3$, three nodes in $T_{i+1}$ 
 are enough to cover any prefix range of $v$.}
 \label{fig:expansion}
\end{figure}

\begin{lemma}\label{lem:cover-points}
\textbf{(Covering ranges of points)} For $i< \log_\beta\log n$, let $v$ be any internal node in tree $T_i$ 
and let 
$u_1, u_2,\ldots, u_f$ be its $f$ children from left-to-right. 
%Define $P(v) \subseteq P$ to be the points lying in the subtree of $v$. 
For $\ell \leq r$, define a ``bounded range'' $P(v, \ell, r) \subseteq P$ to be the points 
in the subtrees of $u_{\ell}, u_{\ell+1},\ldots, u_r$.
A ``prefix range'' (resp., ``suffix range'') is a special case of 
bounded range with $\ell=1$ (resp., $r=f$).
 Then, 
%, i.e., $P(v, \ell, r)=\bigcup_{j\in [\ell, r]} P(u_j)$.
\begin{itemize}
    \item (Covering bounded ranges) For any $P(v, \ell, r)$, there exists 
    at most $(2\beta-1)$ nodes $v_1,v_2,\ldots,v_{2\beta-1}$ in tree $T_{i+1}$ and
    integers $\ell_1,\ldots, \ell_{2\beta-1}, r_1, \ldots, r_{2\beta-1}$ such that 
    \[P(v, \ell, r)= \bigcup_{j\in [2\beta-1]} P(v_j,\ell_j,r_j). \]
    Among the $(2\beta{-}1)$ bounded ranges, 
    at least $(2\beta{-}2)$ are  either 
    prefix or suffix ranges.
    \begin{comment}
    \item (Covering prefix and suffix ranges)
    For any $P(v,1, r)$  and $P(v, \ell,f)$, there exists 
    at most $\beta$ nodes $v_1,v_2,\ldots,v_{\beta}$ in tree $T_{i+1}$ and
    integers $r_1, \ldots, r_{\beta}$ such that 
    \[P(v, 1, r)= \bigcup_{j\in [\beta]} P(v_j,1,r_j), \text{ and }
    P(v, \ell,f)= \bigcup_{j\in [\beta]} P(v_j,\ell_j,f), 
    \text{ respectively}. \]
\end{comment}
	\item (Covering prefix and suffix ranges)
    For any $P(v,1, r)$, there exists 
    at most $\beta$ nodes $v_1,v_2,\ldots,v_{\beta}$ in tree $T_{i+1}$ and
    integers $r_1, \ldots, r_{\beta}$ such that 
    \[P(v, 1, r)= \bigcup_{j\in [\beta]} P(v_j,1,r_j). \]
    This holds analogously for any suffix range.
\end{itemize}

\end{lemma}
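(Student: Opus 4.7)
The plan is to exploit the structure of $T_{i+1}$ near $v$. Because $f_i = n^{1/\beta^i} = (n^{1/\beta^{i+1}})^{\beta} = f_{i+1}^{\beta}$, replacing the two-level tree $(v, Ch(v))$ by a balanced fanout-$f_{i+1}$ subtree produces, in $T_{i+1}$, a subtree $\tau_v$ rooted at $v$ of depth exactly $\beta$, whose leaves are precisely the original children $u_1, \ldots, u_{f_i}$. I plan to establish the prefix/suffix statement first and then reduce the bounded-range case to it by a standard LCA decomposition inside $\tau_v$.

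For the prefix claim I would walk down the unique root-to-leaf path $v = w_0, w_1, \ldots, w_\beta = u_r$ in $\tau_v$ and let $c_j$ denote the index of $w_{j+1}$ among the children of $w_j$. The leaves $u_1, \ldots, u_r$ then decompose as
\[
P(v, 1, r) \;=\; P(w_{\beta-1}, 1, c_{\beta-1}) \;\cup\; \bigcup_{j=0}^{\beta-2} P(w_j, 1, c_j - 1),
\]
since every leaf strictly to the left of $w_{j+1}$ inside $w_j$'s subtree is accounted for at level $j$, while the children of $w_{\beta-1}$ are themselves the leaves and include $u_r$ at position $c_{\beta-1}$. This yields at most $\beta$ prefix ranges (some may be empty when $c_j = 1$). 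The suffix case is completely symmetric.

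For a bounded range $P(v, \ell, r)$ with $\ell < r$, I would take $w$ to be the LCA of $u_\ell$ and $u_r$ in $\tau_v$, at some depth $d$ with $0 \le d \le \beta - 1$, and let $x, y$ be the children of $w$ on the paths to $u_\ell$ and $u_r$, respectively, at positions $p < q$ among $w$'s children. Then
\[
P(v, \ell, r) \;=\; S_x \;\cup\; P(w, p+1, q-1) \;\cup\; S_y,
\]
where $S_x$ is a suffix decomposition inside the subtree of $x$ and $S_y$ is a prefix decomposition inside the subtree of $y$. Since those subtrees have depth $\beta - d - 1$, applying the prefix/suffix result to each side gives at most $\beta - d - 1$ ranges per side, and together with the single middle bounded range $P(w, p+1, q-1)$ this amounts to at most $1 + 2(\beta - d - 1) \le 2\beta - 1$ ranges in total. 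The only produced range that fails to be a prefix or suffix is the middle one, so at least $2\beta - 2$ of the ranges are prefix or suffix (padding with empty ranges when $d > 0$), matching the lemma.

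The main obstacle I expect is pure bookkeeping: confirming that $\tau_v$ really has depth $\beta$ (so that the path decomposition yields exactly $\beta$ and not $\beta \pm 1$ terms), and carefully handling the degenerate cases where $c_j = 1$, where the middle range is empty (when $q = p+1$), or where the LCA coincides with $v$ itself (so the ``middle'' range already lives at $v$ and the two sides are pure suffix/prefix decompositions starting one level down). Once these are dispatched, the counts $\beta$ and $2\beta - 1$ follow directly from the fact that $\tau_v$ has $\beta$ levels of fanout-$f_{i+1}$ branching.
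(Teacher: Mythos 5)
Your proposal is correct and follows essentially the same route as the paper: both exploit that $v$'s expansion in $T_{i+1}$ is a depth-$\beta$ subtree whose leaves are $v$'s children, and both perform the canonical decomposition along the paths to $u_\ell$ and $u_r$, yielding $\beta$ prefix (resp.\ suffix) pieces per path and at most one middle bounded piece, for a total of at most $2\beta-1$. The only difference is organizational --- you split at the LCA and reduce to the prefix/suffix case, while the paper directly enumerates the $\beta$ nodes on $\Pi_r$ and the $\beta-1$ remaining nodes on $\Pi_\ell$ with maximal child ranges --- and the counting and degenerate cases you flag work out exactly as you expect.
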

See Fig.~\ref{fig:expansion} for an example of Lemma~\ref{lem:cover-points}; the proof is deferred  to Section~\ref{sec:app-proof}.

\paragraph{Idea 2: Nested shallow cuttings for all bounded 
ranges.} 
Next, multiple shallow cuttings will be constructed at each node in trees $T_2,\ldots, T_{\log_{\beta}\log n}$.
Recall that $f_i=n^{1/\beta^i}$ is the fanout of tree $T_i$, and define $f_1=n^{1/\beta}$.
Consider an arbitrary node $v \in T_i$. 
For all $1\leq \ell \leq r\leq f_i$, a $f_{i-1}^2$-level shallow cutting, ${\cal C}(v,\ell,r)$, is constructed  based on the  last three coordinates of the pointset $P(v, \ell, r)$. We will say that ${\cal C}(v,\ell,r)$  and its cells are \emph{associated} with a tree $T_i$

\begin{comment}
\begin{itemize}
    \item For the 4d dominance structure, for all $1\leq r \leq f_i$, a $f_{i-1}^2$-level shallow cutting, 
    ${\cal C}(v,1,r)$, is constructed  
based on the  last three coordinates of the pointset 
$P(v, 1, r)$. 
    \item For the restricted 4d 5-sided structure, for all $1\leq \ell \leq r\leq f_i$, 
    a $f_{i-1}^2$-level shallow cutting, ${\cal C}(v,\ell,r)$, is constructed  
based on the  last three coordinates of the pointset $P(v, \ell, r)$. 
\end{itemize}
\end{comment}
%For all $1\leq j \leq t$, a $t^2$-level shallow cutting, ${\cal C}(v,1,j)$, is constructed  
%based on the  last three coordinates of the pointset 
%$P(v, 1, j)$. 
%This construction is performed at each node in $T_i$. Finally, 
%this construction is performed at all trees $T_1, T_2,\ldots, T_{\log\log n}$.
%The reason for choosing a $t^2$-level 
%cutting will become clear later.
Let ${\cal C}$ be the collection of cells from all the shallow cuttings constructed.
Pick any cell $C \in {\cal C}$. If $C$ is associated with tree $T_i$, then 
a $(f_i^2/c)$-level shallow cutting is constructed 
based on the conflict list of $C$, where $c$ is a sufficiently large constant. 
The resulting shallow cuttings and cells are called 
{\em nested shallow cuttings} and {\em nested cells}, respectively.
 A FIND-ANY structure of 
Lemma~\ref{lem:find-any}
is constructed based on these nested cells.
Such nested shallow cuttings 
are constructed for each cell in ${\cal C}$. Let ${\cal C}_N$ be the collection of 
all the nested cells constructed. 
The data structure will store all the cells in ${\cal C}$ and ${\cal C}_N$, and 
the FIND-ANY structures corresponding to each cell in ${\cal C}$.
Additionally, for all the nested cells in ${\cal C}_N$ which were constructed at 
the last tree $T_{\log_{\beta}\log n}$, their conflict lists are explicitly stored.

\begin{lemma}
  \label{lemma:size}
The size of the data  structure is $O(n\log n)$.
\end{lemma}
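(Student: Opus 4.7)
The plan is to charge every cell and every stored conflict-list entry back to the points that generated it, grouping everything by which tree $T_i$ it is associated with. Let $d_i=\beta^i/3$ be the depth of $T_i$ (each $T_i$ has $n^{1/3}$ leaves and fan-out $f_i=n^{1/\beta^i}$). A level-by-level count gives $\sum_{v\in T_i}|P(v)| = n\cdot d_i$, and the identity $(f_i/f_{i-1})^2 = n^{-2(\beta-1)/\beta^i}$ will appear repeatedly.

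First I would bound $|{\cal C}|$ one tree at a time. Fix a node $v\in T_i$: each point in $P(v)$ lies in exactly one child subtree, say the $k$-th, so it belongs to $P(v,\ell,r)$ for the $k(f_i{-}k{+}1)=O(f_i^2)$ pairs with $\ell\le k\le r$. Hence $\sum_{\ell\le r}|P(v,\ell,r)| = O(|P(v)|\,f_i^2)$. Because ${\cal C}(v,\ell,r)$ is an $f_{i-1}^2$-level cutting, it contains $O(|P(v,\ell,r)|/f_{i-1}^2)$ cells, and summing over $v$ gives
\[|{\cal C}\cap T_i| \;=\; O\!\left(n\,d_i\,f_i^2/f_{i-1}^2\right) \;=\; O\!\left(n\,\beta^i\,n^{-2(\beta-1)/\beta^i}\right).\]
For every $i\le \log_\beta\log n$ the factor $n^{-2(\beta-1)/\beta^i}$ is bounded by a constant (tight only at the last tree), while $\sum_i \beta^i$ is geometric and dominated by its last term $\beta^{\log_\beta\log n}=\log n$. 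Thus $|{\cal C}|=O(n\log n)$.

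Next I would handle the nested cells. A cell at $T_i$ has a conflict list of $O(f_{i-1}^2)$ points, and its $(f_i^2/c)$-level nested cutting therefore has $O(f_{i-1}^2/f_i^2)$ cells. Multiplying this into the per-tree bound above, the $f_i^2/f_{i-1}^2$ factor cancels and $|{\cal C}_N\cap T_i|=O(n\,d_i)=O(n\beta^i)$, so $|{\cal C}_N|=O(n\log n)$. At the deepest tree $i^\ast=\log_\beta\log n$ we have $f_{i^\ast}^2=n^{2/\log n}=O(1)$, so the explicitly stored conflict lists contribute only $O(1)$ per nested cell, another $O(n\log n)$ in total. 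The FIND-ANY structures are linear in the number of cells they index by Lemma~\ref{lem:find-any}, so the whole data structure uses $O(n\log n)$ space.

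The main obstacle is exhibiting the two cancellations simultaneously: the $f_i^2/f_{i-1}^2$ factor is exactly what shrinks the per-tree $|{\cal C}|$ bound enough that earlier trees do not dominate, while its reciprocal inside the nested cutting keeps $|{\cal C}_N\cap T_i|$ at the clean value $O(n\,d_i)$ so the sum over $i$ is still $O(n\log n)$. Once these two telescoping identities are laid out side by side, every remaining piece of the structure is linear in $|{\cal C}|+|{\cal C}_N|$.
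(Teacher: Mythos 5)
Your argument is correct and essentially reproduces the paper's own accounting: cells per tree are $O(n\,d_i\,f_i^2/f_{i-1}^2)$, the nested cuttings contribute an $O(f_{i-1}^2/f_i^2)$ factor per cell that cancels this ratio to give $O(n\beta^{i-1})$ nested cells per tree, the sum over trees is geometric and dominated by $\log n$, and the conflict lists stored at $T_{\log_\beta\log n}$ are $O(1)$ each. The one point the paper makes explicit and you pass over silently is that each $f_{i-1}^2$-level cutting has $O(|P(v,\ell,r)|/f_{i-1}^2)$ (and not merely $O(\cdot+1)$) cells, which is justified because in the restricted setting $|P(v,\ell,r)|\ge n^{2/3}\ge f_1^2\ge f_{i-1}^2$; this is a minor omission that does not affect the bound.
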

\begin{proof}
Consider a node $v \in T_i$. Firstly, 
$|P(v)| \geq n^{2/3}$, since each leaf node in $T_i$ is associated with 
at least $n^{2/3}$ points. Secondly, for $\beta=3$, we have
$f_{i-1}^2 \leq f_1^2 \leq n^{2/\beta}=n^{2/3}$.
Therefore, $|P(v)| =\Omega(f_{i-1}^2)$. As a result, 
a $f_{i-1}^2$-level shallow cutting on $P(v, \ell, r)$ will have $O(|P(v)|/f_{i-1}^2)$ cells. 
Since $O(f_i^2)$ such shallow cuttings are constructed at $v$, the total number of cells constructed 
will be $O(|P(v)|\cdot f_i^2/f^2_{i-1})$.
The number of cells constructed 
at all the nodes of a given level will be $O(\sum_v |P(v)|\cdot f_i^2/f_{i-1}^2)=O(n\cdot f_i^2/f_{i-1}^2)$.
 Since the conflict list size of each cell is 
$O(f_{i-1}^2)$, a $(f_i^2/c)$-level nested cutting on this conflict list creates $O(f_{i-1}^2/f_i^2)$ nested cells.
As such, 
the number of nested cells at a given level is $O(n\cdot f_i^2/f^2_{i-1}) \times O(f_{i-1}^2/f_i^2)
=O(n)$.  For $\beta=3$, the height of tree $T_i$ is $(\log n^{1/3})/(\log n^{1/3^i})=3^{i-1}$.
Therefore, the number of nested cells  in a tree $T_i$ will be $O(n.3^i)$. 
Summing over all the trees, the number of nested cells 
is $\sum_{i=0}^{\log_3\log n} O(n\cdot 3^i)=O(n\log n)$. 

In tree $T_{\log_{3}\log n}$ the conflict list is explicitly stored with each nested cell. 
The fanout of this tree is two and hence, the size of the conflict list of each 
nested cell is $O(1)$. The number of nested cells in this tree is 
$O(n\log n)$. As a result, storing the conflict list increases 
the space by only a constant factor, and hence, the overall space remains $O(n\log n)$.
\end{proof}

\begin{figure}[tb]
 \begin{center}
      \includegraphics[scale=0.7]{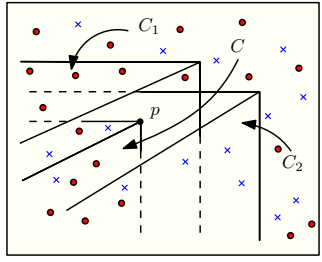}
 \end{center}
 \caption{$C$ is a nested cell constructed 
 at $T_i$ and is associated with a prefix range.
 $\Gamma(C)=\{C_1, C_2, C_3\}$ for $\beta=3$. 
 The figure shows only $C_1$ and $C_2$ and points from $v_1$ and $v_2$ (denoted by red circles and blue crosses respectively).}
 \label{fig:connecting-cells}
 \end{figure}
 
\paragraph{Idea 3: Connecting adjacent trees via cells and nested cells.} In the following lemma, we establish 
the connection between nested cells of $T_i$ and cells of adjacent tree $T_{i+1}$. Henceforth for any cell (or nested cell) $C$, we will denote by $L(C)$ its conflict list.

\begin{lemma}\label{lem:beta-cells}
Consider any nested cell $C\in {\cal C}_N$ which is associated with $T_i$, for $2\leq i <\log_{\beta}\log n$.
Then there exists a set $\Gamma(C)  \subseteq {\cal C}$ such that
\begin{enumerate}
    \item Each cell in $\Gamma(C)$ is associated with tree $T_{i+1}$, 
    \item For all $C' \in \Gamma(C)$, we have $C \subseteq C'$, and 
    \item if a region $q'=\prod\limits_{i=2}^{4}(-\infty, b_i]$ lies inside 
    $C$, then $L(C) \cap q'= \bigcup_{C' \in \Gamma(C)}(L(C') \cap q')$.
\end{enumerate}
If $C$ is associated with a bounded range  $P(v, \ell, r)$, then 
$|\Gamma(C)| \leq 2\beta-1$, and at most one cell in $\Gamma(C)$ is associated 
with a bounded range. Otherwise, if $C$ is associated with a 
prefix or a suffix range, then 
$|\Gamma(C)| \leq \beta$, and all the cells in $\Gamma(C)$ are associated 
with a prefix or a suffix range.
\end{lemma}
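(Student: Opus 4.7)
My plan is to construct $\Gamma(C)$ cell-by-cell, by pulling $C$ back through its parent cell $C^*\in {\cal C}$ and then using Lemma~\ref{lem:cover-points} to decompose the underlying point set into pieces indexed by nodes of $T_{i+1}$, assigning exactly one cell of $T_{i+1}$'s shallow cuttings to each piece. Concretely, let $C^*\in {\cal C}(v,\ell,r)$ be the parent of $C$, so that $v\in T_i$, $L(C^*)\subseteq P(v,\ell,r)$, and $|L(C^*)|\le c_2 f_{i-1}^2$. Applying Lemma~\ref{lem:cover-points} to $P(v,\ell,r)$ produces nodes $v_1,\dots,v_m\in T_{i+1}$ and ranges with $P(v,\ell,r)=\bigcup_{j}P(v_j,\ell_j,r_j)$, where $m\le 2\beta-1$ (with at most one piece a bounded range and the rest prefix/suffix) when $P(v,\ell,r)$ is bounded, and $m\le \beta$ (all of the same prefix/suffix flavour) when $P(v,\ell,r)$ is a prefix or suffix range. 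For each $j$ I will pick $C'_j\in {\cal C}(v_j,\ell_j,r_j)$ to be a cell of the associated $f_i^2$-level shallow cutting whose box contains the apex of $C$, and set $\Gamma(C)=\{C'_j\}_j$.

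\paragraph{Main obstacle.} The hard part is showing that such a $C'_j$ exists, i.e., the apex of $C$ always falls inside some cell of the $f_i^2$-level cutting on $P(v_j,\ell_j,r_j)$. By property~3 of shallow cuttings this reduces to bounding the number of points of $P(v_j,\ell_j,r_j)$ dominated by the apex of $C$ by $c_1 f_i^2$. The nested cutting producing $C$ is at level $f_i^2/c$ on $L(C^*)$, so the apex of $C$ dominates at most $c_2 f_i^2/c$ points of $L(C^*)$. To lift this to a bound over all of $P(v,\ell,r)$ I would invoke the standard normalisation that each nested cell sits inside its parent, $C\subseteq C^*$; this can be arranged at construction time by truncating cells against $C^*$ (the intersection of two downward-closed boxes is again a downward-closed box, and all relevant cutting properties survive). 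Under this normalisation, any point of $P(v,\ell,r)\setminus L(C^*)$ lies outside $C^*$, hence outside $C$, hence is not dominated by the apex of $C$. Thus the apex dominates at most $c_2 f_i^2/c$ points of $P(v,\ell,r)\supseteq P(v_j,\ell_j,r_j)$, and picking the constant $c$ in the nested cutting with $c\ge c_2/c_1$ secures the required $c_1 f_i^2$ bound.

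\paragraph{Finishing the three properties and the counts.} Once each $C'_j$ is in hand, the three numbered claims and the cardinality bounds fall out quickly. Property~1 is immediate from the choice $C'_j\in {\cal C}(v_j,\ell_j,r_j)$. For property~2, the apex of $C$ lying in $C'_j$ means it is coordinate-wise dominated by the apex of $C'_j$, and since both are downward-closed boxes this yields $C\subseteq C'_j$. For property~3 I would verify that, for any 3d dominance region $q'\subseteq C$, both sides equal $P(v,\ell,r)\cap q'$: using $C\subseteq C^*$ gives $L(C)=P(v,\ell,r)\cap C$, so $L(C)\cap q'=P(v,\ell,r)\cap q'$; and using $q'\subseteq C\subseteq C'_j$ together with $\bigcup_j P(v_j,\ell_j,r_j)=P(v,\ell,r)$ gives $\bigcup_j L(C'_j)\cap q'=\bigcup_j P(v_j,\ell_j,r_j)\cap q'=P(v,\ell,r)\cap q'$. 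Finally, the cardinality $|\Gamma(C)|$ and the split between bounded and prefix/suffix cells transfer verbatim from Lemma~\ref{lem:cover-points}, because the construction produces exactly one cell of $T_{i+1}$ per range piece and the range type of $C'_j$ is inherited from that of $P(v_j,\ell_j,r_j)$.
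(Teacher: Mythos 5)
Your proof is correct and follows essentially the same route as the paper's: identify the parent cell $\widehat{C}$, decompose $P(v,\ell,r)$ via Lemma~\ref{lem:cover-points}, and use the apex of $C$ together with property~3 of the $f_i^2$-level cuttings at the $T_{i+1}$ nodes to pick one containing cell per piece. The only difference is that you make explicit two points the paper treats implicitly --- that nested cells can be taken to satisfy $C\subseteq\widehat{C}$ (so the apex of $C$ dominates no points of $P(v,\ell,r)\setminus L(\widehat{C})$) and the choice of the constant $c$ relative to $c_1,c_2$ --- which is a welcome sharpening rather than a deviation.
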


\begin{comment}
\begin{lemma}\label{lem:two-cells}
For any cell (or nested cell) $C$, let $L(C)$ be its conflict list.
Consider any nested cell $C\in {\cal C}_N$ which was constructed at $T_i$, for $0\leq i <\log\log n$. 
Then there exists two cells $C_1 \in {\cal C}$ and $C_2 \in {\cal C}$ such that
\begin{enumerate}
    \item $C_1$ and $C_2$ are constructed at tree $T_{i+1}$, 
    \item $C \subseteq C_1$ and $C \subseteq C_2$, and 
    \item if a region $q'=\prod\limits_{i=2}^{4}(-\infty, b_i]$ lies inside 
    $C$, then $L(C) \cap q=(L(C_1) \cap q) \cup (L(C_2) \cap q)$.
\end{enumerate}
\tcr{Yakov says: Let $C$ be a cell of shallow cutting associated with $P(v,1,j)$, such that $q'\in C$. Then there exist two cells 
$C_1$ and $C_2$ from shallow cuttings associated with $P(u_1,1,j_1)$ and $P(u_2,1,j_2)$, respectively, so that $q\cap C= (q\cap C_1) \cup (q\cap C_2)$.}
\end{lemma}
\end{comment}

\begin{proof}
Define $\widehat{C}$ to be the cell on whose 
conflict list, $L(\widehat{C})$, the $(f_i^2/c)$-level nested 
cutting led to the construction of nested cell $C$.
Let $v$ be the node in $T_i$ such that $\widehat{C}$ is a cell of the shallow cutting for some pointset   $P(v,\ell,r)$. Consider the following two cases:
\begin{itemize}
    \item If $P(v, \ell, r)$ is a bounded range, then via the first case in 
    Lemma~\ref{lem:cover-points}, we define $N(C)=\{v_1, v_2,\ldots, v_{2\beta-1}\}$, and 
    \item If $P(v, \ell, r)$ is a prefix or a suffix range, then via the second case in 
    Lemma~\ref{lem:cover-points}, we define $N(C)=\{v_1, v_2,\ldots, v_{\beta}\}$.
\end{itemize}

In either case, 
\begin{equation}\label{eqn:beta-points}
P(v, \ell, r)= \bigcup_{v_j \in N(C)} P(v_j,\ell_j,r_j).
\end{equation}

Let $p$ be the apex point of $C$. Since $L(\widehat{C})= \widehat{C} \cap P(v,\ell,r)$, 
the number of points of $L(\widehat{C})$ or $P(v,\ell,r)$ which are dominated 
by $p$ will be less than or equal to $f_i^2$, since we chose a sufficiently large constant $c$ 
for the $(f_i^2/c)$-level nested shallow cutting on $L(\widehat{C})$. 
For any $v_j \in N(C)$, since $P(v_j,\ell_j,r_j) \subseteq P(v,\ell, r)$, 
the number of points of $P(v_j,\ell_j,r_j)$ 
dominated by $p$ will be less than or equal to $f_i^2$.
For all $v_j \in N(C)$, the  data structure constructs a $f_i^2$-level 
shallow cutting on $P(v_j, \ell_j, r_j)$. 
Therefore, for each $v_j \in N(C)$, there exists a cell, say $C_j$,
in the $f_i^2$-level shallow cutting constructed on $P(v_j, \ell_j, r_j)$ which contains $p$. 
Then define $\Gamma(C)=\bigcup_{v_j \in N(C)} \{C_j\}$.
This establishes items (1) and (2) in Lemma~\ref{lem:beta-cells}. See Figure~\ref{fig:connecting-cells}.

Now we establish (3). Since $q'\subseteq C \subseteq C_j$, for all 
$C_j \in \Gamma(C)$, we infer  that 
\begin{equation}\label{eqn:subset}
q'= q'\cap C =q' \cap C_j, \text{ for any $C_j \in \Gamma(C)$}.
\end{equation}
From Equation~\ref{eqn:beta-points}, we claim that 
\begin{align*}
& P(v,\ell,r) \cap q' =  \bigcup_{v_j \in N(C)} (P(v_j,\ell_j,r_j) \cap q') \\
\implies & P(v,\ell,r) \cap (q' \cap C) = \bigcup_{v_j \in N(C)} (P(v_j,\ell_j,r_j) \cap (q'\cap C_j ))
\quad \text{(From equation~\ref{eqn:subset})}\\
\implies & L(C) \cap q' = \bigcup_{C_j\in \Gamma(C)} (L(C_j) \cap q').
\end{align*}
\end{proof}

Using the above lemma, the data structure stores pointers from each nested cell  $C \in {\cal C}_N$ to 
its corresponding cells in $\Gamma(C)$.

\paragraph{Query algorithm.} 
%Let the query region be $ q=(-\infty, q_w] 
%\times (-\infty, q_x] \times (-\infty, q_y] \times (-\infty, q_z]$, and let $q'=(-\infty, q_x] 
%\times (-\infty, q_y] \times (-\infty, q_z]$. 
Recall that $q'=\prod_{j=2}^{4}(-\infty, b_j]$.
Starting from $i{=}1$, in the $i$-th iteration of the 
query algorithm, a collection of cells ${\cal C}_{i+1}$ is maintained. 
At the beginning of the $i$-th iteration, the query algorithm will satisfy the 
following three invariants:
\begin{enumerate}
    \item The cells in 
${\cal C}_{i+1}$ are associated to  the tree $T_{i+1}$.

\item $q'$ is guaranteed to lie inside 
all the cells in ${\cal C}_{i+1}$.

\item The problem of determining 
whether $|P\cap q| \geq 1$ is reduced 
to the problem to 
determining whether $|L(\widehat{C}) \cap q'| \geq 1$, 
for any $\widehat{C}\in {\cal C}_{i+1}$.
\end{enumerate}

During the $i$-th iteration, for each cell 
$\widehat{C}\in {\cal C}_{i+1}$, 
we do the following: Perform FIND-ANY query on 
the nested shallow cutting constructed 
based on $L(\widehat{C})$. If no nested cell 
is reported, then we conclude that $|P\cap q|\geq 1$ and stop the algorithm. 
Otherwise, let $C$ be the cell reported. 
Based on Lemma~\ref{lem:beta-cells}, add 
the cells in $\Gamma(C)$ to ${\cal C}_{i+2}$.
%Follow the pointers from $C$ to $C_1$ and 
%$C_2$ (based on Lemma~\ref{lem:beta-cells}), 
%and add $C_1$ and $C_2$ to ${\cal C}_{i+1}$.

In the $i$-th iteration,  
if the algorithm does not stop, then
the three invariants continue to hold 
at the beginning of the $(i{+}1)$-th iteration. 
Bullet~(1) of Lemma~\ref{lem:beta-cells} ensures
that 
all the cells in ${\cal C}_{i+2}$ are associated to 
$T_{i+2}$. For each cell $\widehat{C} \in 
{\cal C}_{i+1}$,  its corresponding nested cell 
$C$ contains $q'$. Combining this fact with
bullet~(2) of Lemma~\ref{lem:beta-cells}
ensures that $q'$ lies inside all cells in ${\cal C}_{i+2}$.
Since $L(\widehat{C}) \cap q= 
L(C) \cap q$, combining this fact with
bullet~(3) of Lemma~\ref{lem:beta-cells} 
ensures that 
computing $L(\widehat{C})\cap q$ is equivalent 
to 
computing $\bigcup_{C' \in \Gamma(C)}(L(C') \cap q)$.

In the end, if the algorithm computes ${\cal C}_{\log_{\beta}\log n}$, then we stop the recursion and 
explicitly scan the conflict list of all the cells in ${\cal C}_{\log_{\beta}\log n}$
to check if any point lies inside $q$.

{\em Initialization steps.}
As an initialization step  we query $T_2$
with $[a_1,b_1]$. Let $\pi_{\ell}$ and $\pi_r$
be the path from the root node to the leaf nodes 
corresponding to $a_1$ and $b_1$, respectively.
For each node $v\in \pi_{\ell} \cup \pi_r$, 
let $w_1, w_2,\ldots, w_f$ be its children.
Let $w_{\ell}$ (resp., $w_r$) be the leftmost 
(resp., rightmost) child such that the first coordinate value 
of all the points in $P(w_{\ell})$
(resp., $P(w_r)$) lie inside $[a_1,b_1]$. 
Then perform FIND-ANY query on the shallow
cutting constructed on $P(v,\ell,r)$. If no 
cell is reported, then conclude that $|P\cap q|
\geq 1$, and stop the algorithm. Otherwise, 
add the cell reported to ${\cal C}_2$. 
Since the height of $T_2$ is a constant, 
 $|{\cal C}_2| \leq c_0$, where $c_0$ is a sufficiently 
 large constant. 
 %Via a similar argument, we construct 
 %${\cal C}_2$ for 4d dominance query with $|{\cal C}_2| \leq c_0$.

\begin{lemma}\label{lem:size-of-c}
For $i\geq 2$, $|{\cal C}_i| \leq c_0\cdot \beta^{i+1}-1$, 
where $c_0$ is a sufficiently large constant.
\end{lemma}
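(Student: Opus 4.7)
The plan is to prove the lemma by induction on $i$, using a strengthened invariant that distinguishes two types of cells. Partition ${\cal C}_i$ into ${\cal B}_i$, the cells associated with a bounded range, and ${\cal P}_i$, those associated with a prefix or a suffix range. I will show that $|{\cal B}_i|$ stays bounded by a fixed constant for all $i \ge 2$, while $|{\cal P}_i|$ grows by at most a factor of $\beta$ per step; summing then yields the stated bound.

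The transition from ${\cal C}_i$ to ${\cal C}_{i+1}$ is governed directly by Lemma~\ref{lem:beta-cells}. Each $\widehat{C} \in {\cal B}_i$ yields a nested cell $C$ whose $\Gamma(C)$ has at most $2\beta-1$ members, of which \emph{at most one} is again bounded and the remaining (at most) $2\beta-2$ are prefix or suffix. Each $\widehat{C} \in {\cal P}_i$ yields $|\Gamma(C)| \le \beta$ cells, all of them prefix or suffix. This gives the coupled recurrence
\[
|{\cal B}_{i+1}| \le |{\cal B}_i|, \qquad |{\cal P}_{i+1}| \le \beta\,|{\cal P}_i| + (2\beta-2)\,|{\cal B}_i|.
\]
Iterating gives $|{\cal B}_i| \le |{\cal B}_2|$ and $|{\cal P}_i| \le \beta^{i-2}|{\cal P}_2| + 2|{\cal B}_2|(\beta^{i-2}-1) = O\bigl((|{\cal B}_2|+|{\cal P}_2|)\,\beta^{i-2}\bigr)$.

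For the base case, I would analyze the initialization more carefully than the stated $|{\cal C}_2| \le c_0$. Since $T_2$ has fanout $n^{1/\beta^2}$ on $n^{1/3}$ leaves, its height is the constant $\beta^2/3$, so $|\pi_\ell \cup \pi_r| = O(1)$. On the common prefix of $\pi_\ell$ and $\pi_r$ down to the LCA of the leaves for $a_1$ and $b_1$, the range $P(v,\ell,r)$ is bounded; strictly below the LCA, nodes on $\pi_\ell$ contribute suffix ranges and nodes on $\pi_r$ contribute prefix ranges. Hence $|{\cal B}_2|$ and $|{\cal P}_2|$ are each bounded by a constant, which I absorb into $c_0$. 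Plugging into the recurrence gives $|{\cal C}_i| \le c_0 + O(c_0\beta^{i-2}) \le c_0\beta^{i+1}-1$ for $c_0$ chosen large enough, completing the induction. The main obstacle is precisely the structural dichotomy in Lemma~\ref{lem:beta-cells}: a naive count would give $|{\cal C}_{i+1}| \le (2\beta-1)|{\cal C}_i|$, which would explode to $(2\beta-1)^{\log_\beta \log n}$ and ruin the $O(\log n)$ per-iteration query cost; it is crucial that a bounded-range cell spawns at most one bounded descendant and otherwise only prefix/suffix cells.
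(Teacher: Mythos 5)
Your proof is correct and follows essentially the same route as the paper: both hinge on the dichotomy from Lemma~\ref{lem:beta-cells} (a bounded-range cell spawns at most one bounded descendant while prefix/suffix cells spawn only prefix/suffix cells), which yields a factor-$\beta$ geometric recurrence from a constant-size base case in $T_2$. Your coupled recurrences for $|{\cal B}_i|$ and $|{\cal P}_i|$ are just a slightly more explicit (and marginally more robust at the base case) form of the paper's induction that at most one cell of ${\cal C}_i$ is associated with a bounded range.
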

\begin{proof}
For  all $2\leq i \leq \log_{\beta}\log n$, 
at most one cell in ${\cal C}_i$ is associated with a bounded range. 
We prove it via induction. For ${\cal C}_2$, the cell in ${\cal C}_2$ corresponding to
the root node of $T_2$ is associated with a bounded range (see Figure~\ref{fig:expansion}(b)).
The remaining cells in ${\cal C}_2$ are associated with a prefix or a suffix range.
Assume it is true for ${\cal C}_i$. Let $\widehat{C}$ be the only cell in ${\cal C}_i$ associated 
with a bounded range, and let $C$ be the nested cell reported by the FIND-ANY query on $L(\widehat{C})$.
Via Lemma~\ref{lem:beta-cells}, observe that at most one cell in $\Gamma(C) \subseteq {\cal C}_{i+1}$ 
is 
associated with a bounded range. For any other cell $\widehat{C} \in {\cal C}_i$ and its corresponding 
nested cell $C$, all the cells in $\Gamma(C) \subseteq {\cal C}_{i+1}$ are associated with a 
prefix or a suffix range (via Lemma~\ref{lem:beta-cells}). Therefore, we conclude that at most one cell in ${\cal C}_{i+1}$ 
is associated with a bounded range. This leads to the following recurrence:
\begin{align*}
|{\cal C}_{i+1}| &\leq \beta \cdot (|{\cal C}_i|-1) + (2\beta -1) \quad \text{ via Lemma~\ref{lem:beta-cells}} \\
&=\beta\cdot|{\cal C}_i| +\beta-1\\
&\leq \beta(c_0\cdot\beta^i-1 ) +\beta-1 \quad \text{ via induction} \\
%&=3|{\cal C}_i| + 2 \quad \text{setting $\beta=3$} \\
&\leq c_0\cdot \beta^{i+1}-1.
\end{align*}
\end{proof}

For any cell $C \in {\cal C}_i$, 
 the number of nested cells in the nested 
shallow cutting constructed on $L(C)$ is
$O(|L(C)|/f_i^2)=O(f_{i-1}^2/f_i^2)=O(n^{2/\beta^{i-1}}/n^{2/\beta^i}) 
=O(n^{\frac{2\beta-2}{\beta^i}})$. Hence, performing a
FIND-ANY operation on the nested shallow cutting
takes $O(\frac{2\beta-2}{\beta^i}\cdot\log n)$ time. As such, 
performing FIND-ANY operations w.r.t. all the 
cells in ${\cal C}_i$ takes 
$O(|{\cal C}_i|\cdot \frac{2\beta-2}{\beta^i}\cdot\log n)
=O(\log n)$ time, since Lemma~\ref{lem:size-of-c} states that 
$|{\cal C}_i|=O(\beta^i)$. Overall, 
it takes $O(\log n\cdot\log_{\beta}\log n)$ time 
to generate sets ${\cal C}_1, {\cal C}_2, \ldots, {\cal C}_{\log_{\beta}\log n}$. The time taken 
to explicitly scan the conflict lists of 
cells in ${\cal C}_{\log_{\beta}\log n}$ is $O(\log n)$, 
since $|{\cal C}_{\log_{\beta}\log n}|=O(\log n)$, 
and each conflict list has a constant size.
The initialization step can be performed 
in $O(\log n)$ time. Overall, 
the query time is $O(\log n\cdot \log\log n)$.

\begin{comment}
Assume that the  coordinates of the 
4d query point $q(q_w,q_x,q_y,q_z)$ are all in the rank space. Then
we perform the following steps. The query algorithm starts from tree 
$T_0$. We visit the $q_w$-th child of the root node $r$ of $T_0$, and 
query the point location data structure built on ${\cal C}(r, 1, q_w)$. 
Let $C$ be the cell reported. If $(q_x,q_y,q_z)$ lies outside $C$, then 
we report NO, i.e., $|P\cap q| \geq 1$. Otherwise, we query the point location 
structure built on ${\cal C}(r,1,q_w,C)$. Let $C'$ be the nested cell reported. 
Again, if $(q_x,q_y,q_z)$ lies outside $C'$, then 
we report NO, i.e., $|P\cap q| \geq 1$. Otherwise, we visit the cells 
$C'_1$ and $C'_2$ from $T_1$ (corresponding to $C'$ from Lemma~\ref{lem:two-cells}), 
and recurse on them. The recursion stops when we visit tree $T_{\log\log n}$. 
For each cell visited by the query algorithm in $T_{\log\log n}$, if
 $(q_x,q_y,q_z)$ lies outside the cell, then report NO; otherwise, scan 
 the conflict list of the cell to check if any point lies inside $q$.

The query time satisfies the following recursion:
\[Q(n) =2 \cdot Q(\sqrt{n}) + O(\log n)= O(\log n \cdot \log\log n), \]
with a base case of $Q(O(1))=O(1)$ when visiting a cell in $T_{\log\log n}$.
\end{comment}

\subsection{The reporting query}
To handle the reporting query, we will need a 
 few additional data structures. 
Firstly, for each node $v\in T_{\log_{\beta}\log n}$, 
based on the points in the subtree of $v$, 
construct a $(c\log^6n)$-level shallow cutting, 
where $c$ is a sufficiently large constant.
Let ${\cal C}_{\log^6n}$ be the collection 
of all the cells associated with each node in 
$T_{\log_{\beta}\log n}$.
For each cell in the cutting, based on its 
conflict list build a data structure which
answers 3d dominance queries (on the last three coordinates). 
Secondly, 
construct a {\em slow} data structure on $P$ to 
answer 4d dominance queries: The data structure of Afshani~\cite{Afshani08} for 3d dominance queries
along with a range tree on the fourth coordinate leads to a data structure with $O(n\log n)$ size 
and $O(\log^2n + k)$ query time. 
The third data structure requires storing additional 
pointers across trees which is discussed next.

\begin{observation}
Let $\Delta=\log_{\beta}\log n -\log_{\beta}\log\log n$. Then, 
$\beta^{\Delta}=\frac{\log n}{\log\log n}$.
\end{observation}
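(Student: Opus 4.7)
The plan is entirely routine algebraic manipulation using standard logarithm identities; there is no substantive obstacle. I would simply substitute the definition of $\Delta$ into the exponent and use the identity $\beta^{x-y} = \beta^x/\beta^y$ together with $\beta^{\log_\beta t} = t$.

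Concretely, I would write
\[
\beta^{\Delta} \;=\; \beta^{\log_\beta \log n \,-\, \log_\beta \log\log n}
\;=\; \frac{\beta^{\log_\beta \log n}}{\beta^{\log_\beta \log\log n}}
\;=\; \frac{\log n}{\log\log n},
\]
which gives exactly the claimed equality. No case analysis, induction, or geometric argument is needed; the observation is stated only to fix a convenient notation $\Delta$ that will be used later to describe the cutoff level of the recursion in the reporting algorithm. The main ``work'' of the statement is semantic rather than mathematical: it tells the reader that $\Delta$ is exactly the depth at which the conflict-list sizes shrink from $\log n$ down to $\log\log n$, so the routine calculation above suffices as its justification.
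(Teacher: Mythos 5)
Your proof is correct and matches the paper's treatment: the paper regards this observation as an immediate consequence of the same logarithm identities (its own one-line justification, in fact, is omitted from the final text as trivial). Nothing further is needed.
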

\begin{comment}
\begin{proof}
$2^{\log\log n-\log\log\log n}
%2^{\log\log n-\log (2\log\log n)=
=2^{\log\left(\frac{\log n}{\log\log n}\right)}=\frac{\log n}{\log\log n}$.
\end{proof}
\end{comment}

\begin{lemma}\label{lem:gen-cell-cover}
For a cell $C$ associated with tree $T_{\Delta}$,
there exists a set $\Lambda(C) \in {\cal C}_{\log^6n}$
of at most $O(\log\log n)$ cells such 
that if $q'$ lies inside $C$, then 
$L(C) \cap q'= \bigcup_{C'\in \Lambda(C)}(L(C')\cap q')$.
\end{lemma}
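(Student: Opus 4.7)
The plan is to iterate Lemma~\ref{lem:cover-points} along the chain of trees $T_\Delta, T_{\Delta+1},\ldots,T_{\log_\beta\log n}$ to decompose the range underlying $C$, and then invoke property~(3) of shallow cuttings to lift each piece to a cell in ${\cal C}_{\log^6 n}$. Suppose $C$ is a cell in the $f_{\Delta-1}^2$-level shallow cutting on $P(v,\ell,r)$ with $v\in T_\Delta$. A simple induction on the number $s$ of applications of Lemma~\ref{lem:cover-points} shows that starting from a single bounded range, we obtain at most $2\beta^s-1$ ranges, at most one of which remains bounded --- each prefix/suffix range yields $\beta$ prefix/suffix ranges, and the surviving bounded range yields one bounded plus $2\beta-2$ prefix/suffix ranges. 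After $s=\log_\beta\log n - \Delta = \log_\beta\log\log n$ iterations, the original range is a union of $O(\log\log n)$ ranges $P(w,\ell',r')$ with $w\in T_{\log_\beta\log n}$. Since $f_{\log_\beta\log n} = n^{1/\log n} = 2$, each such range is a union of at most two full subtrees, giving a set $W\subseteq T_{\log_\beta\log n}$ of $O(\log\log n)$ nodes with $P(v,\ell,r)=\bigcup_{w\in W}P(w)$.

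Next, I would locate a cell of ${\cal C}_{\log^6 n}$ at each $w\in W$. Let $p$ be the apex of $C$. Because $C$ came from an $f_{\Delta-1}^2$-level cutting on $P(v,\ell,r)\supseteq P(w)$, $p$ dominates at most $c_2\, f_{\Delta-1}^2$ points of $P(w)$. Using $\beta^\Delta = \log n/\log\log n$ (the observation above) and $\beta=3$,
\[
f_{\Delta-1}^2 \;=\; n^{2/\beta^{\Delta-1}} \;=\; n^{2\beta\log\log n/\log n} \;=\; 2^{2\beta\log\log n} \;=\; \log^{2\beta}n \;=\; \log^6 n.
\]
Choosing the constant $c$ in the $(c\log^6 n)$-level shallow cutting on $P(w)$ sufficiently large relative to $c_1, c_2$, property~(3) of shallow cuttings produces a cell $C_w\in{\cal C}_{\log^6 n}$ associated with $w$ that contains $p$. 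Since $C$ and $C_w$ are both dominance boxes, $p\in C_w$ forces $C\subseteq C_w$. Define $\Lambda(C)=\{C_w:w\in W\}$, so $|\Lambda(C)|=O(\log\log n)$.

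Finally, correctness of the decomposition is immediate: if $q'\subseteq C$, then $q'\subseteq C\subseteq C_w$ for every $w\in W$, so
\[
L(C)\cap q' \;=\; P(v,\ell,r)\cap q' \;=\; \bigcup_{w\in W}\bigl(P(w)\cap q'\bigr) \;=\; \bigcup_{w\in W}\bigl(P(w)\cap C_w\cap q'\bigr) \;=\; \bigcup_{C'\in\Lambda(C)}\bigl(L(C')\cap q'\bigr).
\]
The main obstacle is the bookkeeping in the iterated expansion --- particularly maintaining the invariant that at most one bounded range survives through all $\log_\beta\log\log n$ iterations --- together with pinning down the arithmetic $f_{\Delta-1}^2 = \log^6 n$ when $\beta=3$, which is precisely the reason the final shallow cuttings are tuned to level $\log^6 n$.
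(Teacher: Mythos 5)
Your proposal is correct and follows essentially the same route as the paper's proof: iterate Lemma~\ref{lem:cover-points} for $\log_\beta\log\log n$ steps with the invariant that at most one bounded range survives (giving the $2\beta^i-1$ bound), use the apex of $C$ together with $f_{\Delta-1}^2=\Theta(\log^{2\beta}n)$ and the sufficiently large constant $c$ to locate a containing cell of each $(c\log^6 n)$-level cutting, and conclude via $q'\subseteq C\subseteq C_w$. Your extra step of splitting the final bounded ranges in the fanout-2 tree $T_{\log_\beta\log n}$ into full subtrees is a harmless (arguably cleaner) refinement that matches how ${\cal C}_{\log^6 n}$ is actually defined, and does not change the argument.
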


\begin{comment}
\begin{lemma}\label{lem:gen-cell-cover}
Let $\Delta=\log_{\beta}\log n -\log_{\beta}\log\log n$.
For a cell $C$ associated with tree $T_{\Delta}$,
there exists a set $\Lambda(C) \in {\cal C}_{\log^2n}$
 such 
that if $q'$ lies inside $C$, then 
$L(C) \cap q= \bigcup_{C'\in \Lambda(C)}(L(C')\cap q)$.
If $C$ is associated with a bounded range of the form 
$P(v, \ell, r)$, then $|\Lambda(C)| \leq \log\log n\cdot\log_{\beta}\log\log n$. 
Otherwise, if $C$ is associated with a prefix or a suffix range 
of the form $P(v,1,r)$ or $P(v, \ell, f)$, then $|\Lambda(C)|\leq \log\log n$.
\end{lemma}
\end{comment}

\begin{proof}
Consider the case where $C$ is associated with a bounded range at a node $v\in T_{\Delta}$.
Let $P(v,\ell,r)$ be the pointset on which the shallow
cutting led to the construction of $C$.
After applying first case of Lemma~\ref{lem:cover-points},
let $N_1(C)$ 
be the collection of {\em covering nodes} in $T_{\Delta+1}$   such that
$P(v,\ell, r)= \bigcup_{v_j\in N_1(C)}P(v_j, \ell_j, r_j)$.
Then, $|N_1(C)|\leq 2\beta-1$. Now, let $N_2(C)$ be 
the collection of covering nodes in $T_{\Delta+2}$ after applying 
Lemma~\ref{lem:cover-points} on  pointset 
$P(v_j, \ell_j,r_j)$, for all $v_j\in N_1(C)$. 
At most one node in $N_1(C)$ corresponds to a pointset 
with a bounded range. Therefore, 
\[|N_2(C)| \leq \beta(|N_1(C)|-1) + (2\beta-1) \leq \beta (2\beta-2) + (2\beta-1) = 2\beta^2-1.\]
Performing this recursively, we obtain 
\[|N_i(C)| \leq 2\beta^i-1, \text{ for any $i\geq 2$.}\]
In particular, we obtain $|N_{\log_{\beta}\log\log n}(C)|=O(\log\log n)$, and 
\begin{equation}\label{eqn:gen-pointsets}
P(v,\ell, r)= \bigcup_{v_j\in N_{\log_{\beta}\log\log n}(C)}P(v_j, \ell_j, r_j).
\end{equation}

Now we adapt the proof of Lemma~\ref{lem:beta-cells}
to finish the proof. 
Let $p$ be the apex point of $C$. Since $L(C)= C \cap P(v,\ell,r)$, 
the number of points of $L(C)$ or $P(v,\ell,r)$ which are dominated 
by $p$ will be  $\Theta(\log^{2\beta}n)$, 
since we construct an $f_{\Delta-1}^2$-level shallow cutting on $P(v, \ell, r)$, 
and $f_{\Delta-1}^2= (n^{1/\beta^{\Delta-1}})^2=\left(n^{\frac{\beta \log\log n}{\log n}} \right)^2
=\Theta(\log^{2\beta}n)$.
%$(n^{1/2^{\beta}})^2$-level shallow cutting on $P(v,1,j)$, 
%and $(n^{1/2^{\beta}})^2=\left(n^{\frac{\log\log n}{\log n}}\right)^2=(2^{\log\log n})^2=\log^2n$. 
For any $v_j \in N_{\log_{\beta}\log\log n}(C)$, 
since  $P(v_j,\ell_j,r_j) 
\subseteq P(v,\ell,r)$, 
the number of points of $P(v_j, \ell_j,r_j)$ 
dominated by $p$ will be $O(\log^{2\beta}n)$.
The data structure constructs a $(c\log^6n)$-level shallow cutting on $P(v_j,\ell_j,r_j)$. 
Since $c$ is chosen to be a sufficiently large constant,
there exists a cell, say $C_j$, in the $(c\log^6n)$-level shallow cutting which contains $p$.
Then we set $\Lambda(C)=\bigcup_{v_j \in N_{\log_{\beta}\log\log n}(C)} \{C_j\}$.

For any $C_j \in \Lambda(C)$, since 
 $q'\subseteq C \subseteq C_j$,  we claim that 
\begin{equation}\label{eqn:gen-subset}
q'= q'\cap C =q' \cap C_j.
\end{equation}
From Equation~\ref{eqn:gen-pointsets}, we claim that 
\begin{align*}
&P(v,\ell,r) \cap q'= \bigcup_{C_j \in \Lambda(C)} (P(v_j, \ell_j,r_j)\cap q')\\
\implies & P(v, \ell, r) \cap (q'\cap C)= 
\bigcup_{C_j \in \Lambda(C)} (P(v_j, \ell_j,r_j) \cap (q' \cap C_j))
\quad \text{(From equation~\ref{eqn:gen-subset})}\\
\implies &L(C) \cap q'=\bigcup_{C_j \in \Lambda(C)}  (L(C_j) \cap q').
\end{align*}
\end{proof}
Using Lemma~\ref{lem:gen-cell-cover}, 
for each cell $C$ associated with tree 
$T_{\Delta}$ we will maintain pointers to 
each cell in $\Lambda(C)$.

\begin{lemma}
The space occupied by  additional 
data structures is $O(n\log n)$.
\end{lemma}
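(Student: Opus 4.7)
The plan is to bound the three additional data structures separately and show each uses $O(n\log n)$ space.

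For the shallow cuttings attached to $T_{\log_{\beta}\log n}$ together with the 3d dominance structures on their conflict lists, I would exploit that $T_{\log_{\beta}\log n}$ has fanout $f_{\log_{\beta}\log n}=n^{1/\log n}=2$ and $n^{1/3}$ leaves, so its height is $\Theta(\log n)$. At every level the sets $P(v)$ partition $P$, hence $\sum_{v\in T_{\log_{\beta}\log n}}|P(v)|=O(n\log n)$. A $(c\log^6 n)$-level shallow cutting on $P(v)$ produces $O(|P(v)|/\log^6 n)$ cells, each with a conflict list of size $O(\log^6 n)$; placing Afshani's linear-space 3d dominance structure on each conflict list costs $O(\log^6 n)$ per cell, i.e.\ $O(|P(v)|)$ per node. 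Summing over the tree yields $O(n\log n)$.

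The slow 4d dominance structure uses $O(n\log n)$ by its stated construction (Afshani's 3d dominance plus a range tree on the fourth coordinate). For the pointer set guaranteed by Lemma~\ref{lem:gen-cell-cover}, I would count cells of the shallow cuttings associated with $T_\Delta$ by mirroring the proof of Lemma~\ref{lemma:size}: one level of $T_\Delta$ contributes $O(n\cdot f_\Delta^2/f_{\Delta-1}^2)$ cells across all bounded ranges, and $T_\Delta$ has $O(\log n/\log\log n)$ levels, for a total of $O\!\left((n\log n/\log\log n)\cdot f_\Delta^2/f_{\Delta-1}^2\right)$ cells. Plugging in $f_\Delta=\log n$ and $f_{\Delta-1}=\log^{\beta} n=\log^{3} n$ (for $\beta=3$), this evaluates to $O(n/(\log^{3}n\cdot\log\log n))$ cells, each carrying $O(\log\log n)$ pointers by Lemma~\ref{lem:gen-cell-cover}. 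The total pointer space is therefore $O(n/\log^3 n)$, dwarfed by $O(n\log n)$.

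The only calibration worth flagging is the $\log^6 n$ exponent in the first component: it is tuned so that the number of cells $O(|P(v)|/\log^6 n)$ times the per-cell cost $O(\log^6 n)$ of Afshani's 3d dominance structure collapses to $O(|P(v)|)$ per node, which is exactly what lets the per-level ``$\sum_v |P(v)|$'' accounting propagate cleanly to the $O(n\log n)$ bound. With this in place, the rest of the argument is routine bookkeeping, and the pointer-count contribution of Lemma~\ref{lem:gen-cell-cover} is not the bottleneck — the shallow cuttings at the bottom tree are.
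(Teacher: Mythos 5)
Your proof is correct and follows essentially the same route as the paper: per-level accounting for the $(c\log^6 n)$-level cuttings and their 3d dominance structures over the $O(\log n)$ levels of $T_{\log_\beta\log n}$, plus a count of the cells associated with $T_\Delta$ multiplied by the $O(\log\log n)$ pointers of Lemma~\ref{lem:gen-cell-cover}. If anything, your pointer count is slightly more careful than the paper's (you include all $O(f_\Delta^2)$ cuttings per node and sum over all levels of $T_\Delta$), and both computations land comfortably at $o(n\log n)$.
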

\begin{proof}
The space 
occupied by a single dominance structure 
will be $O(\log^6n)$ and the number of 
cells in a $(c\log^6n)$-level shallow cutting 
is $O(|P(v)|/\log^6n)$, where $|P(v)|$ is the number 
of points in the subtree of $v$. Therefore, 
the space occupied at any given level of 
$T_{\log_{\beta}\log n}$ is $O(\sum_{v} \frac{|P(v)|}{\log^6n}\times \log^6n)=O(n)$.
Since the height of $T_{\log_{\beta}\log n}$ is 
$O(\log n)$, the overall space occupied by all the 
3d dominance structures will be $O(n\log n)$.

The number of 
cells associated with a node $v$ in $T_{\Delta}$ 
(with fanout $f_{\Delta}=n^{1/\beta^{\Delta}}$)
is $O(|P(v)|/f_{\Delta-1}^2)\times O(f_{\Delta})=O(|P(v)|/f_{\Delta})$,
and as such, the number of cells at a 
given level in $T_{\Delta}$ is 
$O(n/f_{\Delta})$. Therefore, via Lemma~\ref{lem:gen-cell-cover}, 
the total number of pointers 
maintained will be 
\[ O\left(\frac{n}{f_{\Delta}}\cdot\log\log n\right)=
O\left(\frac{n}{n^{1/\beta^{\Delta}}}\cdot\log\log n\right)
= O\left(\frac{n\log\log n}{n^{\frac{\log \log n}{\log n}}}\right)=O\left(\frac{n\log\log n}{\log n}\right)=o(n\log n).\] 
\end{proof}

To answer a reporting query, we start by performing the same steps as the 
query algorithm for the emptiness query. Recall that if the emptiness query algorithm 
enters the $i$-th iteration, then either the algorithm stops in that iteration  or 
a collection ${\cal C}_{i+2}$ is generated for the $(i{+}1)$-th iteration.
To answer the reporting query,  two modifications are made to the emptiness query algorithm:
\begin{enumerate}
    \item If the algorithm stops in the $i$-th iteration, and $i<\Delta{-}1$, then 
    query the slow structure.
    \item If the algorithm reaches the $(\Delta{-}1)$-th iteration, then for each cell 
    $C \in {\cal C}_{\Delta}$, query with 
    $q'$ the 3d 
    dominance structures associated 
    with each cell in $\Lambda(C)$ (Lemma~\ref{lem:gen-cell-cover}).
\end{enumerate}

\begin{lemma}
The query time is $O(\log n\cdot\log\log n+ k)$.
\end{lemma}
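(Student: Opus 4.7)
The plan is to bound the reporting time by separating it into the cost of the emptiness skeleton (already shown to be $O(\log n \cdot \log\log n)$) plus the extra work contributed by each of the two reporting-specific modifications, and to show that in both cases the extra work is absorbed into $O(\log n \cdot \log\log n + k)$.

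For Modification~1 (early termination at some iteration $i < \Delta - 1$ triggering a query to the slow structure), the key observation is that the skeleton only stops early when a FIND-ANY call on the nested cutting of some $L(\widehat{C})$ with $\widehat{C} \in {\cal C}_{i+1}$ returns no cell. By the guarantee of the $(f_i^2/c)$-level nested shallow cutting together with Lemma~\ref{lem:find-any}, this forces the query point $(b_2,b_3,b_4)$ to dominate $\Omega(f_i^2)$ points of $L(\widehat{C})$, and iterating the identity of Lemma~\ref{lem:beta-cells}(3) from $T_2$ down to $T_{i+1}$ shows that these points all lie in $P \cap q$. Hence $k = \Omega(f_i^2) \ge \Omega(f_{\Delta-2}^2) = \Omega((\log n)^{2\beta^2})$, so the $\log^2 n$ term in the slow structure's $O(\log^2 n + k)$ bound is dominated by $k$ and the slow query costs $O(k)$.

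For Modification~2 (reaching iteration $\Delta - 1$ and probing 3d dominance structures on cells in $\Lambda(C)$ for each $C \in {\cal C}_\Delta$), I would combine $|{\cal C}_\Delta| = O(\beta^\Delta) = O(\log n / \log\log n)$ from Lemma~\ref{lem:size-of-c} with $|\Lambda(C)| = O(\log\log n)$ from Lemma~\ref{lem:gen-cell-cover} to bound the number of 3d dominance probes by $O(\log n)$. Each probed conflict list has size $O(\log^6 n)$, so a 3d dominance structure built on that many points answers a query in $O(\log\log n + k_{C'})$ time, giving total search overhead $O(\log n \cdot \log\log n)$ plus the cumulative output $\sum_{C,\,C'} k_{C'}$.

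The main obstacle I expect is controlling this cumulative output, since naively a point of $P \cap q$ could be charged once per cell in some $\Lambda(C)$ and once per $C \in {\cal C}_\Delta$, inflating the bound by a $\log\log n$ factor. I would handle it by tracing the disjointness of the underlying pointsets: the iterated application of Lemma~\ref{lem:cover-points} inside the proof of Lemma~\ref{lem:gen-cell-cover} produces a partition of $P(v,\ell,r)$ into the sub-pointsets $P(v_j,\ell_j,r_j)$ that back the cells of $\Lambda(C)$, so no point is attributed to two distinct cells within the same $\Lambda(C)$; analogously, the sub-pointsets backing distinct cells of ${\cal C}_\Delta$ are disjoint, being produced by iterated covering from $T_2$ down to $T_\Delta$. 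Together with the standard shallow-cutting property that each point lies in $O(1)$ cells of a single cutting, this yields $\sum_{C,\,C'} k_{C'} = O(k)$, and combining with the skeleton time and the two modification analyses gives the claimed $O(\log n \cdot \log\log n + k)$ reporting time.
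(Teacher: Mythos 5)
Your proof is correct and follows essentially the same route as the paper: in the early-termination case the guaranteed output size absorbs the slow structure's $O(\log^2 n)$ term, and in the final iteration the $O(\beta^{\Delta}\cdot\log\log n)$ dominance probes at $O(\log\log n + k_i)$ each give $O(\log n\cdot\log\log n + k)$. Two small remarks: the nested cutting at a cell of ${\cal C}_{i+1}$ is $(f_{i+1}^2/c)$-level, so stopping at iteration $i<\Delta-1$ yields $k=\Omega(f_{i+1}^2)=\Omega(\log^{2\beta}n)$ rather than your stronger $\Omega(f_i^2)$ (still far above $\log^2 n$, so the conclusion stands), and your explicit disjointness argument showing $\sum_i k_i=O(k)$ is a detail the paper leaves implicit.
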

\begin{proof}
 If case~(1) happens, then via properties of nested shallow cuttings, 
we conclude that $|P\cap q|=\Omega(f_{i+1}^2)=\Omega(n^{2/\beta^{i+1}})$, since an $f_{i+1}^2$-level 
nested shallow cutting 
is performed at cells associated with tree $T_{i+1}$. Since $i< (\Delta-1)$, it implies that 
$|P\cap q|=\Omega(n^{2/\beta^{\Delta-1}})
=\Omega(n^{2\beta\log\log n/\log n})=\Omega(\log^{2\beta}n)=\Omega(\log^2n)$, for $\beta=3$.
As a result, querying the slow data structure takes $O(\log^2n + k)=O(k)$ time.

If case~(2) happens, then the number of 
3d dominance structures queried will be 
$O(|{\cal C}_{\Delta}|\cdot \log \log n)$ (via Lemma~\ref{lem:gen-cell-cover}). 
The time taken to query a single 3d dominance 
structure is $O(\log\log^6n + k_i)
=O(\log\log n + k_i)$, where 
$k_i$ is the output size. Therefore, the 
overall query time will be 
$O(|{\cal C}_{\Delta}|\cdot \log \log n \cdot 
\log\log n + \sum_{i} k_i)=
O(\beta^{\Delta}\log^2\log n + k)=O(\log n\cdot\log\log n + k)$ (via Lemma~\ref{lem:size-of-c}).
\end{proof}

\section{General 4d orthogonal range reporting and higher dimensions}
\label{sec:general}
Our result for five-sided 4d queries can be extended to answer general 4d queries using standard techniques. For example, we can support queries $\prod_{i=1}^{2}[a_i,b_i]\times \prod_{i=3}^4(-\infty,b_i]$ by constructing the range tree $T_2$ on the second coordinates of all points. We keep the data structure for five-sided queries in every internal node of $T_2$. For a given query
$q=\prod_{i=1}^{2}[a_i,b_i]\times \prod_{i=3}^4(-\infty,b_i]$, we find the leaf that contains the predecessor of $b_2$, the leaf that contains the successor of $a_2$,  and their lowest common ancestor $w$.  Let $w_l$ and $w_r$ be the left and right children of $w$.  Let $q_l= [a_1,b_1]\times [a_2,+\infty)\times \prod_{i=3}^4(-\infty,b_i]$ and $q_r=[a_1,b_1]\times (-\infty,b_2]\times \prod_{i=3}^4(-\infty,b_i]$. Then $q\cap P=q\cap P(w)= (q_l\cap P(w_l))\cup (q_r\cap P(w_r))$. Since both $q_l$ and $q_r$ are five-sided queries, we can answer .  In the same way, we can support general queries $q=\prod_{i=1}^4[a_i,b_i]$ without increasing the query time and  using $O(n\log^4n)$ space.

We can also obtain a data structure that supports general  queries in $d>4$ dimensions using range trees. The space usage and query time grow by $O(\log n)$ factor with every dimension $d>4$.
\begin{theorem}
  \label{theor:multidim}
  There exists a data structure that answers $d$-dimensional orthogonal range reporting queries on a pointset $P$ in $O(\log^{d-3}n\log\log n + k)$ time and uses $O(n\log^dn)$ space, where $k$ is the number of points in the query range and $n$ is the number of points in $P$. 
\end{theorem}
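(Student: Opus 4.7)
The plan is to prove Theorem~\ref{theor:multidim} by induction on $d$, using the general 4d construction from Section~\ref{sec:general} as the base case and applying a classical range-tree reduction to lift the construction by one dimension at a time. The base case $d=4$ is already established at the start of Section~\ref{sec:general}: a fully general (2-sided on every axis) 4d data structure with $O(n\log^4 n)$ space and $O(\log n\log\log n + k)$ query time. This is the strongest possible starting point and saves us from having to re-invoke the 5-sided machinery inside the recursion.

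For the inductive step, suppose we have a $(d-1)$-dimensional general range-reporting structure $\mathcal{D}_{d-1}$ with $O(\log^{d-4}n\log\log n + k)$ query time and $O(n\log^{d-1}n)$ space. I would construct $\mathcal{D}_d$ as follows. Build a balanced binary range tree $\mathcal{T}$ on the $d$-th coordinates of the input point set $P$, and at every internal node $v\in\mathcal{T}$ attach a copy of $\mathcal{D}_{d-1}$ built on the projection of $P(v)$ (the points in the subtree of $v$) onto the first $d-1$ coordinates. Since every point of $P$ lies in $O(\log n)$ subtrees of $\mathcal{T}$, the total space satisfies the recurrence $S_d(n) = O(\log n)\cdot S_{d-1}(n)$, which solves to $O(n\log^d n)$.

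To answer a query $q = \prod_{i=1}^d [a_i,b_i]$, decompose the interval $[a_d,b_d]$ into $O(\log n)$ canonical nodes $v_1,\ldots,v_m$ of $\mathcal{T}$ whose subtrees partition $\{\,p\in P : a_d \le p_d \le b_d\,\}$. For each $v_j$, issue the $(d-1)$-dimensional query $\prod_{i=1}^{d-1}[a_i,b_i]$ to the attached $\mathcal{D}_{d-1}$ and output its answer. Correctness is immediate since each point of $q\cap P$ belongs to exactly one $P(v_j)$ and is reported by exactly that sub-query. The non-output work is $O(\log n)$ sub-queries each of non-output cost $O(\log^{d-4}n\log\log n)$, which sums to $O(\log^{d-3}n\log\log n)$, while the per-query output sizes $k_j$ satisfy $\sum_j k_j = k$, giving total query time $O(\log^{d-3}n\log\log n + k)$ as required.

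The argument is essentially routine once Section~\ref{sec:general} supplies the base case, so there is no real obstacle to overcome; the only point worth explicitly checking is that the inductive hypothesis supplies a structure handling fully 2-sided queries on every one of its coordinates, which is precisely what allows the recursive sub-query $\prod_{i=1}^{d-1}[a_i,b_i]$ to be issued without any further reduction. This is why starting the induction at the general 4d data structure (rather than at the 5-sided one) is important: otherwise each extra dimension would have to be handled by a more delicate case split rather than by a single clean application of a range tree.
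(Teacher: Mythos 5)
Your proposal is correct and follows essentially the same route as the paper: take the general (2-sided in every coordinate) 4d structure from Section~\ref{sec:general} as the base case and lift it one dimension at a time with a standard range tree, paying an $O(\log n)$ factor in both space and query time per extra dimension via the canonical decomposition of the new coordinate's interval. You merely spell out the routine recurrences that the paper leaves implicit, so there is nothing substantive to add.
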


%%% Local Variables:
%%% mode: latex
%%% TeX-master: "4d-dom-pm"
%%% End:

\newcommand{\cQ}{{\cal Q}}
\newcommand{\cL}{{\cal L}}
\newcommand{\cD}{{\cal D}}
\newcommand{\cC}{{\cal C}}

\newcommand{\tcount}{\mathtt{count}}
\newcommand{\tselect}{\mathtt{select}}
\newcommand{\fin}{\mathtt{end}}

\section{Construction Algorithm}
\label{sec:construct}

As explained in previous sections, our data structure relies on a large number of three-dimensional shallow cuttings. There are algorithms that construct shallow cuttings in nearly linear time. That is, a shallow cutting of an $n$-point set can be constructed in time $O(n\log n)$~\cite{AfshaniT18,AfshaniCT14}. If we directly apply these algorithms, it would take   $\Omega(n^{3/2})$ time to construct our data structure.  

In this paper we use a different approach: We observe that conflict lists of cells in shallow cuttings are not needed in our case. The construction algorithm is modified so that the runtime is pseudo-linear in the \emph{number of cells} of the constructed 3d shallow cutting. We start by reviewing the algorithm of Afshani and Tsakalides~\cite{AfshaniT18}. Then we describe our implementation. Next we  show how the new  algorithm for 3d shallow cuttings can be applied to create our data structure for restricted 5-sided queries (Theorem~\ref{thm:rest-5-sided}) in $O(n\log^5n)$ time. Finally we explain how this result can be used to construct data structures for general orthogonal range reporting. In this section we will denote the second, third, and fourth coordinates of a point $p$ by $x(p)$, $y(p)$, and $z(p)$ respectively.

\paragraph{Previous Work: Overview of an $O(n\log n)$-Time Algorithm.}
In~\cite{AfshaniT18} the authors show how a 2d shallow cutting can be maintained under deletions. Then they combine their method with the sweep-plane approach and obtain a three-dimensional shallow cutting.  
A two-dimensional shallow cutting is a staircase, i.e., a polyline obtained by alternating horizontal and vertical segments, see Fig.~\ref{fig:staircase}.  An \emph{inner corner} of a 2d shallow cutting is the lower endpoint of a vertical segment (resp.\ the left endpoint of a horizontal segment) and an \emph{outer corner} is the upper endpoint of a vertical segment (resp.\ the right endpoint of a horizontal segment).    When some inner corner  dominates less than $k$ points, we start the process of \emph{patching} the staircase. Let $c_i$ denote the outer corner that precedes $d_i$. Let $c'_i$ denote the leftmost point such that $y(c_i)=y'(c_i)$ and $c'_i$ dominates $10k$ points. We extend the preceding outer corner to the point $c_i'$ and $c'_i$ becomes a new outer corner. Then we find the point $d'_{i+1}$, such that $x(d'_{i+1})=x(c'_i)$ and $d'_{i+1}$ dominates $9k$ points;  we add a vertical segment $[c_i',d'_{i+1}]$ to the staircase and $d'_{i+1}$ becomes a new inner corner. Next, we find the leftmost point $c'_{i+1}$ with  $x(c'_{i+1})=x(d'_{i+1}$ such that $c'_{i+1}$ dominates $10k$ points; we add the horizontal segment $[d'_{i+1},c'_{i+1}]$ to the staircase.  We continue adding horizontal and vertical segments with corresponding outer and inner corners to the staircase until the following termination condition is satisfied: Let $d_j$ be the rightmost inner corner dominated by $c'_i$. If $d_j$ dominates at least $7k$ points, we add the corner $c''_i$ instead of $c'_i$, so that $x(c''_i)=x(d_j)$ and $y(c''_i)=y(c'_i)$.  Finally we add a vertical segment $[c''_i,d_j]$ and finish the process of patching. All corners dominated by new outer corners are removed from the two-dimensional shallow cutting. 

In order to obtain a three-dimensional shallow cutting, a plane parallel to the $xy$-plane is moved in $-z$-direction starting at $z=+\infty$.
A two-dimensional shallow cutting is maintained for all points that are below the plane\footnote{To simplify the description, we sometimes do not distinguish between a point and its $xy$-projection.}.  Every time when the plane hits a point, we remove it from the set of points below the plane. When some inner corner of the staircase dominates less than $k$ points, we patch the staircase as described above.  Every added  outer corner becomes an apex point of some cell in the  three-dimensional shallow cutting.

\begin{figure}[tb]
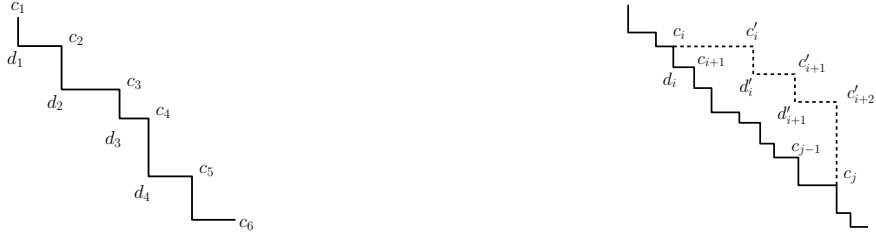

  \centering
  \begin{minipage}{.5\textwidth}
    \centering
    \includegraphics[width=.4\linewidth,page=1]{Figures/staircase}
%  \captionof{figure}{(a)}
%  \label{fig:test1}
\end{minipage}%
\begin{minipage}{.5\textwidth}
  \centering
  \includegraphics[width=.4\linewidth,page=3]{Figures/staircase}
%  \captionof{figure}{(b)}
%  \label{fig:test2}
\end{minipage}  
  \caption{Left: Inner and outer corners of a staircase (2d shallow cutting). Right: Patching of a staircase. Patch and new corners are shown with dashed polyline. The old corner $c_j$ dominates at least $7k$ points, but old corners $c_{i+1}$, $\ldots$, $c_{j-1}$ dominate less than $7k$ points.}
  \label{fig:staircase}
\end{figure}

\paragraph{Our Implementation}
The crucial difference between our result and  the previous work is that we reduce the number of updates on the two-dimensional shallow cutting. An update happens only when we have to remove an outer corner and patch the staircase. To this aim, we associate with each outer corner $c_j$ the $z$-position of the sweep-plane when it should  be removed (provided that $c_j$ was not removed earlier by the patching procedure). A more detailed description of our method follows below.

We maintain a queue $\cQ$ of corner events. $\cQ$ supports insertions, deletions, and find-max operations.  Every time when an inner corner $d_i=(x,y)$ is added to the staircase, we compute the highest value  $v$ such that the point $(x,y,v)$ dominates less than $k$ points of $P$; we associate an event  $(x,y,v)$ with $d_i$ and add it  to $\cQ$.  Obviously, $d_i$ should be removed from the staircase when the $z$-coordinate of the sweep-plane is equal to $v$ (unless $d_i$ was removed earlier by the patching procedure). We will use the notation $\fin(d_i)=v$.  We also maintain the list $\cL$ of all outer and inner corners sorted by $x$-coordinate. For each corner $c_i$ in the list we keep a pointer to the associated event in $\cQ$ (as well as a reverse pointer from an event in $\cQ$ to the corner in $\cL$).

Our algorithm also requires a data structure that supports range counting and range selection queries in 3d. A 3d orthogonal range counting query on a set $P$ computes the number of points in $q\cap P$ for a three-dimensional axis-parallel rectangle $q'=\prod_{i=1}^3 [a_i,b_i]$. 
A three-dimensional $z$-selection query $(a,b,k)$ asks for the highest value $v$  such that the point $p=(a,b,v)$ dominates exactly $k$ points in $P$\footnote{In this paper we consider a special case of selection queries, the dominance selection queries. For simplicity, this special case will be denoted selection queries.}. A three-dimensional $y$-selection query $(a,b,k)$ asks for the highest value $v$  such that the point  $p=(a,v,b)$ dominates exactly $k$ points in $P$.
A three-dimensional $x$-selection query $(a,b,k)$ asks for the highest value $v$ such that  such that the point  $p=(v,a,b)$ dominates exactly $k$ points in $P$.

The complete algorithm works as follows. We find the largest $z$-coordinate of the sweep-plane, such that at least one inner corner of the staircase dominates less than  $k$ points. This value can be found by extracting an event with the highest $z$-coordinate from the queue $\cQ$ (if there are several events with the same $z$-coordinate, we select the event with the smallest $x$-coordinate). Then we patch the corresponding corner $d_i$ as described above. Let $c_i$ denote the outer corner that precedes $d_i$.  We find the  point  $c'_i$ with $y(c'_i)=y(d_{i-1})$ that dominates  exactly $10k$ points.  Finding  $c'_i$ is equivalent to answering a range selection query. When $c'_i$ is known we find the point  $d'_i$ with the highest $y$-coordinate, such that $x(c'_i)=x(d'_i)$ and $c'_i$ dominates $9k$ points. We can find $d'_i$ using a range selection query too. Next, we identify the rightmost old corner $d_j$, such that $x(d_j)\le x(d'_i)$. The corner $d_j$ can be found by binary search on $\cL$. If $d_j$ dominates more than $7k$ points If $d_j$ dominates at most $7k$ points, we set $x(d'_i)=x(d_j)$, discard $c'_i$, and connect $d'_i$ with $d_j$ by a vertical segment. This completes the patching procedure.  When $c'_i$ and $d'_i$ are computed, we update the list $\cL$. New corners  $c_i'$ and $d_i'$ are inserted into $\cL$; old outer corners dominated by $d_i'$ and preceding inner corners are removed from $\cL$. 
For every  new inner corner $d_i$ that is inserted into $\cL$, we compute $\fin(c_i)$ and insert $(x(c_i),y(c_i),\fin(c_i))$  into $\cQ$.

In summary, we can construct a $t$-level three-dimensional shallow cutting with $g=O(\frac{m}{t})$ cells by executing $O(g)$ binary searches, executing $O(g)$ queries  and updates of $\cQ$, answering $O(g)$ three-dimensional range counting queries, and answering $O(g)$ three-dimensional range selection queries. The queue $\cQ$ can be implemented as a binary search tree, so that updates and queries can be supported in $O(\log m)$ time. Our algorithm is summarized in the following lemma.  
\begin{lemma}
  \label{lemma:construct3d}
  Suppose that we are given a data structure that supports three-dimensional orthogonal range counting queries and two-dimensional range selection queries on a set $P$ in times $f_{\mathtt{count}}(m)$ and $f_{select}(m)$ respectively, where $m$ is the number of points in $P$.   Then a  $t$-shallow cutting for a set $P$ can be constructed in $O(g( \log m + f_{\tcount}(m)+f_{select}(m))$ time, where $m$ is the number of points in $P$ and $g=O(\frac{m}{t})$ is the number of cells in the shallow cutting. 
\end{lemma}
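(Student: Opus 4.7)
The plan is to analyze the total cost of the sweep algorithm described above as $O(g)$ iterations of a main loop, each performing $O(1)$ queries into the given data structures plus $O(\log m)$ auxiliary work on the search structures $\cQ$ and $\cL$. The key structural fact inherited from the Afshani--Tsakalides analysis is that the staircase maintained as the sweep plane descends is correct: the outer corners produced over the entire sweep are in bijection with the apex points of the cells of a valid $t$-level 3d shallow cutting, so their total number is exactly $g=O(m/t)$.

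First I would bound the number of main-loop iterations. Each iteration extracts the event from $\cQ$ with the largest $z$-coordinate, which is precisely the next moment at which some inner corner ceases to dominate at least $k$ active points and so triggers a patching operation. Each patching operation creates at least one new outer corner $c'_i$ (or $c''_i$), and since every outer corner ever produced is an apex of an output cell, the number of main-loop iterations is $O(g)$. A single patching operation may span several old inner corners and so add several new (outer, inner) corner pairs, but each added pair beyond the first comes with the destruction of at least one old pair. Amortizing creations against output apex creations, the total number of corner insertions and deletions in $\cL$ and $\cQ$ over the whole sweep is also $O(g)$.

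Second, I would account for the per-iteration cost. Extracting the maximum event from $\cQ$ and binary-searching $\cL$ for the rightmost old inner corner $d_j$ with $x(d_j)\le x(d'_i)$ each cost $O(\log m)$ in a balanced-BST implementation. Computing $c'_i$ and $d'_{i+1}$ reduces to one range selection query each at cost $f_{\tselect}(m)$; the termination test whether $d_j$ dominates at least $7k$ active points is a single range counting query at cost $f_{\tcount}(m)$; and for every newly inserted inner corner $d_i$, computing $\fin(d_i)$ is another selection query at cost $f_{\tselect}(m)$. All range queries are evaluated against the original set $P$ but are restricted in the $z$-dimension by the current sweep-plane coordinate, so no explicit maintenance of the active pointset is needed. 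Insertions and deletions in $\cL$ and $\cQ$ cost $O(\log m)$ each and, by the amortized bound on total corner activity, contribute $O(\log m)$ per iteration on average.

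The main obstacle will be the amortized accounting of the corner operations: one must show that the combined number of insertions, deletions, and selection/counting queries invoked across all patching operations is $O(g)$, even though a single patching episode can touch many old corners. Given the Afshani--Tsakalides invariant that destroyed corners are never recreated, each corner is paid for at its creation and the total work is bounded by a constant times the number of outer corners, i.e., $O(g)$. Multiplying by the per-operation cost then yields the claimed bound $O(g(\log m + f_{\tcount}(m) + f_{\tselect}(m)))$.
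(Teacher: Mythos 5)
Your proposal is correct and follows essentially the same route as the paper: the paper's justification is exactly that the sweep performs $O(g)$ patching events (since every new outer corner is an apex of an output cell), each charged $O(1)$ operations on $\cQ$ and $\cL$ at $O(\log m)$ apiece plus $O(1)$ three-dimensional counting and selection queries against the static set $P$ restricted by the sweep-plane coordinate. Your added amortized accounting of corner insertions/deletions only makes explicit what the paper states summarily, so there is no substantive difference.
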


\paragraph{Construction of  Shallow Cuttings in Trees $T_i$.}
We construct shallow cuttings for all sets $P(v,i,j)$ using the method from Lemma~\ref{lemma:construct3d}. In order to apply this method, we must answer three-dimensional orthogonal range counting and range selection queries. 

We construct a data structure that supports four-dimensional orthogonal range counting queries on the input set $P$. A three-dimensional orthogonal range counting query $q'$ on $P(v,i,j)$ is equivalent to a four-dimensional orthogonal range counting  query $[l,r]\times q'$ on $P$ where $l$ and $r$ are the smallest and the largest first coordinates of points in $P(v,i,j)$. Using range trees and fractional cascading, we can answer four-dimensional counting queries in $O(\log^3 n)$ time. We can reduce $x$-selection  to range counting via  binary search on $x$-coordinates: We store $x$-coordinates of all points from $P$ in a balanced binary tree. To answer an $x$-selection query $(a,b,k)$, we visit nodes starting with the root node. For each visited node $\nu$, we compute the number of points $k'$in $P(v,i,j)\cap (-\infty,e]\times(-\infty,a]\times (-\infty,b]$ where $e$ is the $x$-coordinate stored in $\nu$. If $k'=k$, we return $e$. If $k'<k$ we move to the right child of $\nu$, and if $k'>k$, we move to the left child of $\nu$. Since, we visit $O(\log n)$ nodes, an $x$-selection query is answered in $O(\log^4n)$ time.
We can answer $y$-selection and $z$-selection queries in the same way. 
Combining this method with Lemma~\ref{lemma:construct3d}, we can construct a shallow cutting on a set $P(v,i,j)$ in time
$O(g\log^4 n)$ where $g$ is the number of cells in the shallow cutting or in $O(\log^4n)$ time per cell. 

The data structure that supports four-dimensional range counting in $O(\log^3n)$ time can be constructed in $O(n\log^4 n)$ time.  
As explained in Lemma~\ref{lemma:size}, the total number of cells in all shallow cuttings is  $O(n\log n)$. Hence all shallow cuttings needed in our data structure can be constructed in time $O(n\log^5 n)$.

We also need to construct pointers between cells of  different shallow cuttings. To this end, we will use a separate data structure for each set $P(v,l,r)$.This structure stores the apex points of cells from  $C(v,l,r)$  and supports the following three-dimensional one-reporting queries: given a query range $q=[q_1,q_2]\times [q_x,\infty)\times [q_y,\infty)\times [q_z,\infty)$, we must report only one point from the query range. Using the dominance range reporting structure from~\cite{Afshani08} with minor modifications, we can support such queries in $O(\log n)$ time. The data structure for one-reporting queries can be constructed in $O(n\log  n)$ time.  Then, for $i=1,2,\ldots$, we visit all sets $P(v,l,r)$ associated to $T_i$. For every $P(v,l,r)$ of $T_i$ we identify five sets $P(u_j,l_j,r_j)$ such that $P(v,l,r)=\cup_{j=1}^3P(u_j,l_j,r_j)$ and $u_j$ are nodes of $T_{i+1}$. Next we consider all nested shallow cuttings of $\cC(v,l,r)$. For each cell of each nested shallow cutting, we find the cell $C_j$ of $P(u_j,l_j,r_j)$, $j=1$,$\ldots$,$5$,  that contains $C$. If  $C_j$ contains $C$, then the apex point of $C_j$ dominates the apex point of $C$. Hence we can find $C_j$ in $O(\log n)$ time by answering a query on a dominance one-reporting  data structure for  $C(u_j,l_r,r_j)$. Thus we can construct all necessary pointers between cells in $O(\log n)$ time per pointer.  Hence all pointers between cells can be computed in $O(n\log^2 n)$ time. Other components of data structure for restricted 5-sided queries (slow data structures and conflict lists of cells associated with nodes of $T_{\log_{\beta}\log n}$) can be also constructed in $O(n\log^4 n)$ time.

% Our query procedure relies on existence of pointers that connect each cell $C_N$  of nested shallow cuttings associated to a tree  $T_i$ with containing cells $C_j$, $j=1,2,3$ of shallow cuttings associated with a tree $T_{i+1}$.  The containing cells are such that for any query $q$ contained in $C_N$,  $q\cap L(C_N)=\cup_{j=1}^3 q\cap L(C_j)$.  To this aim, we consider each  shallow cutting  of $P(v,l,r)$ . All apex points of cells in  are stored in a data structure that supports dominance range reporting queries; we modify the dominance range reporting structure so that for a query range $q'=[q_x,\infty)\times [q_y,\infty)\times [q_z,\infty)$ we must report only one point from the query range. Such one-reporting queries can be supported in $O(\log n)$ time by modifying the result of Afshani~\cite{Afshani08}.  All 3-d dominance data structures can be constructed in $O(n\log n)$ time.   For every set $P(v,l,r,)$ of $T_i$ we identify three sets $P(u_j,l_j,r_j)$ such that $P(v,l,r)=cup_{j=1}^3P(u_j,l_j,r_j)$ and $u_j$ are nodes of $T_{i+1}$. For every cell $C$ of each nested shallow cutting, we find the cell $C_j$ of $P(u_j,l_j,r_j)$, $j=1$,$\ldots$,$3$,  that contains $C$. If  $C_j$ contains $C$, then the apex point of $C_j$ dominates the apex point of $C$. Hence we can find $C_j$ in $O(\log n)$ time by answering a dominance one-reporting query. Thus we can construct all necessary pointers between cells in $O(\log n)$ time per pointer. 

Our main result is the following theorem. %A three-dimensional orthogonal range counting query asks for the number of points in a three-dimensional rectangle $q=\prod_{i=1}^3[a_i,b_i]$.
\begin{theorem}
  \label{theor:construct}
  Data structure for restricted 5-sided 4d queries, described in Theorem~\ref{thm:rest-5-sided},  can be constructed in $O(n\log^5 n)$ time. 
\end{theorem}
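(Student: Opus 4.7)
The plan is to combine Lemma~\ref{lemma:construct3d} with a single global auxiliary structure for four-dimensional range counting, so that every shallow cutting needed in Theorem~\ref{thm:rest-5-sided} can be constructed at $O(\log^4 n)$ cost per cell and every inter-cutting pointer at $O(\log n)$ cost apiece. The total cell count of $O(n\log n)$ established in Lemma~\ref{lemma:size} then yields the claimed $O(n\log^5 n)$ bound.

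First I would preprocess $P$ once into a 4d range tree with fractional cascading supporting orthogonal range counting in $O(\log^3 n)$ time, built in $O(n \log^4 n)$ time. Since each $P(v,\ell,r)$ is determined by a contiguous interval $[l,r]$ of first coordinates, every 3d range counting query on $P(v,\ell,r)$ reduces to a 4d counting query on $P$ whose first-coordinate slab is $[l,r]$. The selection queries called for by Lemma~\ref{lemma:construct3d} (used to locate new inner and outer corners during patching) are answered by binary search over a balanced BST of the relevant coordinate values, each comparison invoking one counting query, at cost $O(\log^4 n)$ per selection.

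Next I would iterate over all sets $P(v,\ell,r)$ across all trees $T_i$ and invoke Lemma~\ref{lemma:construct3d} to build $\mathcal{C}(v,\ell,r)$ together with its nested cuttings. Each constructed cell costs $O(\log n + \log^3 n + \log^4 n) = O(\log^4 n)$ time, and by Lemma~\ref{lemma:size} there are $O(n\log n)$ cells in total, giving $O(n\log^5 n)$ overall. To realize the pointers of Lemma~\ref{lem:beta-cells}, I would, for each $P(u_j,\ell_j,r_j)$, build a 3d dominance one-reporting structure (a minor modification of Afshani's~\cite{Afshani08}, of size $O(n\log n)$ with $O(\log n)$ query time) on the apex points of its cells; for each nested cell $C$ of $T_i$ and each of the at most $2\beta-1$ covering children in $T_{i+1}$, the cell $C_j \in \Gamma(C)$ is obtained as the unique apex dominating the apex of $C$, in $O(\log n)$ time per pointer, for a total of $O(n\log^2 n)$.

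Finally the remaining ingredients---the $(c\log^6 n)$-level cuttings $\mathcal{C}_{\log^6 n}$ with their attached 3d dominance reporters, the $\Lambda(C)$ pointers of Lemma~\ref{lem:gen-cell-cover}, the slow $O(\log^2 n + k)$-query 4d structure, and the constant-sized conflict lists stored at the leaves of $T_{\log_\beta \log n}$---can all be built within $O(n\log^4 n)$ using the same global counting and one-reporting primitives. The main obstacle is step two: confirming that the $\mathtt{select}$ queries required inside the patching procedure of Lemma~\ref{lemma:construct3d} can indeed be answered through a $O(\log^4 n)$ reduction to the static global 4d counter, so that no per-set preprocessing---which would quickly exceed the cell budget---is ever needed and the $O(\log^4 n)$ amortized cost per constructed cell is genuinely achieved.
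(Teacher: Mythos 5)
Your proposal is correct and follows essentially the same route as the paper: a single global 4d range-counting structure ($O(\log^3 n)$ per query) with selection reduced to counting via binary search at $O(\log^4 n)$ per selection, Lemma~\ref{lemma:construct3d} applied per cutting so the $O(n\log n)$ cells of Lemma~\ref{lemma:size} cost $O(n\log^5 n)$, pointers via dominance one-reporting on apex points at $O(\log n)$ each, and the remaining components within $O(n\log^4 n)$. The ``obstacle'' you flag is resolved exactly as you suggest---the paper answers $x$-, $y$- and $z$-selection by descending a balanced search tree on the relevant coordinate, issuing one slab-restricted 4d counting query per visited node, with no per-set preprocessing.
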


In order to construct the data structure for (unrestricted) 5-sided queries described in Theorem~\ref{theor:5sid}, we (1) construct the data structure for restricted queries and (2) recursively construct data structures for $n^{1/3}$ sets of $n^{2/3}$ points. Hence the construction time $C(n)$ satisfies the recurrence $C(n)=O(n\log^5n)+ n^{1/3}\cdot C(n^{2/3})$.
Let $c(n)=\frac{C(n)}{n}$. Then $c(n)=O(\log^5 n)+ c(n^{2/3})$. The latter recurrence resolves to $c(n)=O(\log^5 n)$.
Hence $C(n)=O(n\log^5 n)$ and the data structure of Theorem~\ref{theor:5sid} can be constructed in $O(n\log^5 n)$ time. 

Data structures for the general case of orthogonal range reporting consist of multiple instances of structures from Theorem~\ref{theor:5sid}.
Hence, applying Theorem~\ref{theor:construct}, we can construct the data structure for $d$-dimensional orthogonal range reporting queries described in Theorem~\ref{theor:multidim}. The construction time is $O(n\log^{4+d}n)$.

%%% Local Variables:
%%% mode: latex
%%% TeX-master: "4d-dom-pm"
%%% End:

\section{Proof of Lemma~\ref{lem:cover-points} }
\label{sec:app-proof}
We start with the following observation. 
\begin{observation}\label{lem:expansion}
For any node $v \in T_i$, let $P(v) \subseteq P$ be the points in its subtree.
Let $\Delta$ be the depth of node $v$ in $T_i$ (the root node is at depth~$0$).
Then there exists a node $v' \in T_{i+1}$ such that 
(a) $P(v')=P(v)$, and (b) $v'$ is at depth  $\beta\cdot\Delta$.
\end{observation}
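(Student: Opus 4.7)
The plan is to prove Observation~\ref{lem:expansion} by induction on $\Delta$, taking $v'=v$ itself. The construction of $T_{i+1}$ from $T_i$ re-uses the node set of $T_i$ verbatim and only inserts new internal nodes between each parent-child pair, so $v\in T_i$ is automatically a node of $T_{i+1}$ whose subtree has the same leaf set, giving $P(v')=P(v)$ for free. The key structural ingredient for the depth calculation is that when the expansion step replaces an internal node $u\in T_i$ together with its $f_i$ children $Ch(u)$ by a balanced subtree of fanout $f_{i+1}$, this new subtree has height exactly $\log_{f_{i+1}} f_i = \beta$, since $f_i=n^{1/\beta^i}=f_{i+1}^{\beta}$. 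Hence each original $T_i$-edge is subdivided into a path of length $\beta$ in $T_{i+1}$.

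For the induction, the base case $\Delta=0$ is the root $r$ of $T_i$, which also serves as the root of $T_{i+1}$ (both trees have the same $n^{1/3}$ leaves spanning $P$), so $P(v')=P(r)=P$ and the depth condition $0=\beta\cdot 0$ hold trivially. For the inductive step, let $v$ have depth $\Delta\ge 1$ in $T_i$ and let $u$ denote its parent at depth $\Delta-1$. By the inductive hypothesis, $u$ sits at depth $\beta(\Delta-1)$ in $T_{i+1}$. The expansion performed at $u$ places the children $Ch(u)$, and in particular $v$, as leaves of the newly inserted height-$\beta$ subtree rooted at $u$, so $v$ acquires depth $\beta(\Delta-1)+\beta=\beta\Delta$ in $T_{i+1}$, as required.

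I expect no real obstacle: the statement is essentially bookkeeping about edge subdivision during the expansion step. The only point one must verify is that the simultaneous expansions at all internal nodes of $T_i$ do not conflict, which is immediate because each expansion acts only on a parent-to-its-own-children edge bundle, and these bundles are pairwise disjoint. This also confirms that $v$ remains a well-defined node of $T_{i+1}$ even though $v$ and its parent are both expanded at the same time.
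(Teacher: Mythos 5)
Your proof is correct, and it reaches the conclusion by a slightly different mechanism than the paper. The paper also identifies $v'$ as the node of $T_{i+1}$ with the same subtree (asserting its existence from the construction, exactly the observation you make explicit by noting that the expansion only inserts new internal nodes between a parent and its children, so the node set of $T_i$ survives into $T_{i+1}$), but it derives the depth relation by a balance/counting argument: every node at depth $\Delta$ of $T_i$ has $|P(v)|=n/f_i^{\Delta}$ points, every node at depth $\Delta'$ of $T_{i+1}$ has $n/f_{i+1}^{\Delta'}$ points, and equating the two with $f_i=f_{i+1}^{\beta}$ gives $\Delta'=\beta\Delta$. You instead argue structurally, by induction on $\Delta$: each expansion replaces the parent-to-children bundle at an internal node by a perfect $f_{i+1}$-ary tree of height $\log_{f_{i+1}}f_i=\beta$, so every $T_i$-edge becomes a length-$\beta$ path, and depths scale by $\beta$. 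Both routes hinge on the identity $f_i=f_{i+1}^{\beta}$; your version has the minor advantage of not relying on the perfectly uniform point counts per level (which the paper's counting step implicitly uses and which holds only because of the restricted input), while the paper's version dispenses with the induction in one line. Your closing remark that the simultaneous expansions act on pairwise disjoint parent--children bundles, so $v$ keeps its identity, is a fair and correct way to justify the existence claim the paper leaves implicit.
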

\begin{proof}
Since the fanout of $T_i$ is $f_i=n^{1/\beta^i}$, we claim that
\begin{equation}\label{eqn:delta}
\frac{n}{f_i^{\Delta}}=\frac{n}{n^{\Delta/\beta^i}}=|P(v)|.
\end{equation}
 
Let $v'$ be the node in $T_{i+1}$ such that $P(v')=P(v)$. 
Our construction of $T_{i+1}$ ensures that such a node $v'$ exists.
Let $\Delta'$ be the depth of node $v'$ in $T_{i+1}$.
Since the fanout of $T_{i+1}$ is $f_{i+1}=n^{1/\beta^{i+1}}$, we claim that
\begin{equation}\label{eqn:delta-prime}
\frac{n}{f_{i+1}^{\Delta'}}=\frac{n}{n^{\Delta'/\beta^{i+1}}}=|P(v)|.
\end{equation}
Combining Equations~\ref{eqn:delta} and \ref{eqn:delta-prime}, we observe that $\Delta'=\beta\cdot \Delta$.
\end{proof}

Now we are ready to prove Lemma~\ref{lem:cover-points}. 
\begin{proof}
(Covering bounded ranges) In $T_i$, let $\Delta$ be the depth of node $v$. Via Observation~\ref{lem:expansion}, there exists a 
node $v'$ in $T_{i+1}$ such that $P(v')=P(v)$. 
Again via Observation~\ref{lem:expansion}, for any $P(v, \ell, r)$, there exists two 
  nodes $u'_{\ell}$ and $u'_r$ 
 at depth $\beta \cdot (\Delta+1)$ such that 
 $P(u'_{\ell})=P(u_{\ell})$ and $P(u'_r)=P(u_r)$, respectively.
 See Figure~\ref{fig:expansion}.
Let $\Pi_r$ (resp., $\Pi_{\ell}$) be the path 
from $v'$ to $u'_r$ (resp., $u'_{\ell}$). Then  assign $v_1, v_2,\ldots, v_{\beta}$ to
be the $\beta$ nodes on $\Pi_r$ (excluding $u_r'$).
Next, assign $v_{\beta+1}, v_{\beta+2},\ldots, v_{2\beta-1}$ to be the $(\beta-1)$ nodes on 
$\Pi_{\ell}$ (excluding $v'$ and $u'_{\ell}$). 

For all $1\leq j\leq (2\beta-1)$,  let the child nodes of $v_j$
be $w_1,w_2,\ldots,w_{f'}$. Then we
pick $P(v_j,\ell_j,r_j)$ if 
 $w_{\ell_j}$ is the leftmost (resp., $w_{r_j}$ is the rightmost) 
child of $v_j$ such that $P(w_{\ell_j}) \subseteq P(v, \ell, r)$ 
(resp., $P(w_{r_j}) \subseteq P(v, \ell, r)$).
Then, the reader can observe that $P(v, \ell, r)= \bigcup_{j\in [2\beta-1]} P(v_j,\ell_j,r_j)$.

Let $v_1$ be the node common to both $\Pi_{\ell}$ and $\Pi_r$.
For any node $v_j \in \Pi_r \setminus \{v_1\}$ (resp., $v_j \in \Pi_{\ell}
\setminus \{v_1\}$), 
its child node $w_{1}$ (resp., $w_{f'}$) will always satisfy 
$P(w_1) \subseteq P(v,\ell,r)$ (resp., $P(w_{f'}) \subseteq P(v,\ell,r)$). 
Hence, for all $v\in \Pi_r \setminus \{v_1\}$, we have $\ell_j=1$, and 
for all $v\in \Pi_{\ell} \setminus \{v_1\}$, we have $r_j=f'$.

(Covering prefix ranges) In $T_{i+1}$, let $\Pi_{r}$ be the path from $v'$ to $u_r'$ 
(excluding $u_r'$). Then assign $v_1, v_2,\ldots, v_{\beta}$ to be the $\beta$ nodes 
on $\Pi_r$. As before, for all $1\leq j\leq \beta$,  let the child nodes of $v_j$
be $w_1,w_2,\ldots,w_{f'}$. Then we
pick $P(v_j,1,r_j)$ if $w_{r_j}$ is the rightmost
child of $v_j$ such that resp., $P(w_{r_j}) \subseteq P(v, 1, r)$.
Then,  $P(v, 1, r)= \bigcup_{j\in [\beta]} P(v_j,1,r_j)$.
\end{proof}

\section{Conclusion and future work}
Chazelle~\cite{c90} showed that for orthogonal 
range reporting in $d$-dimensions in the 
pointer machine model, a query time 
of $\log^{O(1)}n + O(k)$ can only be achieved 
by using $\Omega\left(n\left(\frac{\log n}{\log\log n} \right)^{d-1} \right)$
space. Our data structure for 
 4d orthogonal range reporting uses $O(n\log^4n)$
 space. The immediate open problem is to close 
 the logarithmic
 factor gap in the space  
 between our upper bound and the lower bound
 of Chazelle~\cite{c90}.

 The query time 
of our data structure is $O(\log n\cdot\log\log n + k)$. An open question
is to prove that the optimal  query time of $O(\log n + k)$
can be achieved while using 
$n\log^{O(1)}n$ space, or to prove that such a 
result is not possible. The complexity of orthogonal range reporting has also been studied in the RAM model~\cite{Chazelle86}. It is interesting to compare the current state of the art in the PM model with  data structures in the RAM model. In the latter model, the data structure that achieves optimal $O(\log n/\log\log n)$ query time was recently obtained by Nekrich~\cite{Nekrich21soda}.

The ``dual'' version of orthogonal range reporting
is the {\em orthogonal rectangle stabbing} problem, 
where the input is a set of axis-parallel boxes in
a $d$-dimensional space and the query is a point;  the goal is to efficiently report all the 
boxes containing the query point. In 2d,
Chazelle~\cite{Chazelle86} presented an optimal pointer 
machine data structure which uses linear-space 
and answers a query in $O(\log n + k)$ time.
Afshani et al.~\cite{AfshaniAL12} proved that any data structure  which uses linear-space will take $\Omega(\log^{d-1}+ k)$ time to answer a query.
This lower bound is almost matched in 3d by 
the data structure of Rahul~\cite{r15}.
Currently, 
the best known data structure in 4d uses 
$O(n\log n \log^{*}n)$ space and answers a 
query in $O(\log^3n\log\log n +k)$ time~\cite{r15}.
Can the 4d data structures presented in this 
paper be useful in resolving the orthogonal 
rectangle stabbing problem in 4d?

\ifarxiv
\bibliographystyle{plain}
\bibliography{4d-dom-pm}
\fi

\ifsoda

\fi
\appendix

\end{document}